\documentclass[11pt, reqno]{amsart}
\usepackage{fullpage}
\usepackage[T1]{fontenc}
\usepackage{amsthm,amsmath,amsfonts,amssymb,amsaddr}
\usepackage{xcolor}
\usepackage{hyperref}
\hypersetup{
    colorlinks=true,
    linkcolor=purple,    
    urlcolor=purple,
    citecolor=blue
}
\usepackage[pdftex]{graphicx}
\usepackage{booktabs}
\usepackage{array}
\usepackage{multirow}
\usepackage{algorithm}
\usepackage[noend]{algpseudocode}

\newtheorem{theorem}{Theorem}
\newtheorem{corollary}{Corollary}
\newtheorem{lemma}{Lemma}
\newtheorem{proposition}{Proposition}

\theoremstyle{definition}
\newtheorem{definition}{Definition}
\newtheorem{remark}{Remark}
\newtheorem{example}{Example}

\usepackage{mathtools}
\usepackage{caption}
\usepackage{subcaption}

\def\T{{\scriptstyle \top }}

\newcommand{\given}{\mid}

\newcommand{\Real}{\mathbb{R}}
\newcommand{\E}{\mathbb{E}}
\newcommand{\calK}{\mathcal{K}}
\newcommand{\calP}{\mathcal{P}}
\newcommand{\calL}{\mathcal{L}}
\newcommand{\calM}{\mathcal{M}}
\newcommand{\calD}{\mathcal{D}}
\newcommand{\calS}{\mathcal{S}}

\newcommand{\profile}{\mathsf{Profile}}
\newcommand{\Normal}{\mathsf{N}}

\newcommand{\Unif}{\mathsf{Uniform}}

\newcommand\blfootnote[1]{%
  \begingroup
  \renewcommand\thefootnote{}\footnote{#1}%
  \addtocounter{footnote}{-1}%
  \endgroup
}

\usepackage[round]{natbib}

\title{Rashomon-Seeded Annealing for Robust Bayesian Inference in Factorial Designs}

\author{Yiyang Fan$^{1, \ast}$, Soumyakanti Pan$^{1, \ast}$, \and Tyler H. McCormick$^{1, 2}$}
\address{$^{1}$Department of Statistics, University of Washington, Seattle, WA, USA\\ $^{2}$Department of Sociology, University of Washington, Seattle, WA, USA}

\begin{document}
\begin{abstract}
  Integrating over model uncertainty in factorial designs via Bayesian model averaging is hindered by the combinatorial explosion of interpretable interaction effects, often yielding a multimodal posterior, where standard Markov chain Monte Carlo algorithms encounter significant convergence issues. We propose a general computational framework that repurposes Rashomon sets, collections of high‑performing models traditionally valued for prediction and interpretability, as a strategic ``warm start'' for estimating the full posterior. Our method, Rashomon‑seeded annealing, initializes annealed importance sampling (AIS) by anchoring the starting density within these pre-identified, high-evidence regions while preserving global support over the entire model space. Rather than restricting inference to the Rashomon set and understating uncertainty, the AIS correction restores full posterior inference, turning the Rashomon certificate from an inferential truncation into a proposal mechanism. We demonstrate this approach using Rashomon Partition Sets (RPS) as a rigorous, certified seed constructor for factorial designs. The resulting algorithm yields consistent self-normalized posterior summaries, such as model‑averaged cell means, credible intervals, and uncertainty summaries without exhaustive enumeration of the complete model space. This bridges the gap between high-evidence model discovery and rigorous Bayesian inference, and outlines a general strategy in which any high‑posterior seed set can provide computational leverage for AIS‑based model averaging.

  \bigskip
  \noindent \textbf{Keywords.} Bayesian Model Averaging, Annealed Importance Sampling, combinatorial model spaces, interpretable machine learning, treatment effect heterogeneity
\end{abstract}
\maketitle
\blfootnote{$^{\ast}$Co-first authors.}
\vspace{-28pt}

\section{Introduction}\label{sec:intro}

Bayesian model averaging (BMA) provides a coherent probabilistic framework for inference when the true model is unknown \citep{Madigan1994, madigan1996bma, Raftery1997_bmalm, hoeting1999bma}. Let $\calM = \{M_1, \dots, M_G\}$ denote a finite collection of candidate models for an observed data $\calD$, where each $M_g$ is associated with a probability model $p(\calD \given M_g)$. BMA obtains the posterior distribution of any quantity of interest by marginalizing over model uncertainty as
\begin{equation}\label{eq:bma}
    p(\cdot \given \calD) = \sum_{g=1}^{G} p(\cdot \given \calD, M_g) \; p(M_g \given \calD) \;,
\end{equation}
where model-specific inferences are weighted by their respective posterior model probabilities, $p(M_g \given \calD) \propto p(\calD \given M_g) \, p(M_g)$. For example, in factorial designs, BMA addresses the uncertainty inherent in parsimonious representation. While treatments comprise multiple features at varying levels, outcome variation is often driven by only a subset of interactions, so that identifying which treatment combinations to ``pool'' for a simplified interaction structure generates a vast space of candidate hypotheses. Thus, while BMA provides a principled quantification of uncertainty, it is often computationally prohibitive when $|\calM|$ is large --- for instance, growing combinatorially with the number of model components/parameters. In such cases, exhaustive enumeration of $\calM$ is deemed impractical. Moreover, when the posterior over $\calM$ is multimodal, traditional Markov chain Monte Carlo (MCMC) algorithms frequently struggle to navigate the model space, resulting in poor mixing and significant convergence issues \citep[see for e.g.,][]{Volinsky1997, Guan2011, Yang2016}. These challenges have motivated various strategies that approximate BMA by concentrating computational resources on high-posterior regions of the model space. 

One influential approach is Occam’s window \citep{Madigan1994}, which restricts model averaging to a subset not decisively inferior to the MAP model. Similar principles underpin stochastic search variable selection \citep{George1993}, mode-jumping MCMC \citep{Hubin2018_EBMS}, and shotgun stochastic search \citep{Hans2007}, model scan \citep{Chib2020_ModelScan}, which identify high-probability models to improve exploration. This selective paradigm also aligns with PAC-Bayesian theory \citep{McAllester1999, Guedj2019}, which provides generalization guarantees by concentrating density on low-risk regions of the model space. Collectively, these strategies aim to make inference tractable by targeting the finite collection of models capturing the high-posterior regions. Rashomon sets formalize this paradigm by aggregating all models within a specified tolerance of the \emph{maximum a posteriori} (MAP) model \citep{breiman_twocultures}. In our context, we define the Rashomon set $\mathcal{R}_\epsilon$ as the collection of all models whose posterior evidence is within a tolerance $\epsilon \in [0, 1]$ of the maximum a posteriori model $M^\ast \in \calM$, assuming it exists, as
\begin{equation}\label{eq:rashomon_set}
\mathcal{R}_\epsilon = \left\{ M \in \calM : \log \tilde{p}(M \given \calD) \geq \log \tilde{p}(M^\ast \given \calD) - \epsilon \right\}\;,
\end{equation} 
where $\tilde{p}(M \given \calD)$ denotes to the unnormalized posterior density corresponding to $p(M \given \calD)$.

However, conditioning inference on a high-posterior subset of $\calM$, such as the Rashomon set $\mathcal{R}_\epsilon$, entails an inferential truncation of \eqref{eq:bma}. While previous literature utilizes these sets to ensure predictive robustness or to provide interpretable model summaries \citep{Rudin2019_rashomon, Dong2020, Marx2020, xin2022, semenova2022}, relying on them for posterior approximation necessarily neglects the mass beyond $\mathcal{R}_\epsilon$ and understates model uncertainty. In this paper, we propose a distinct utility for these sets. Rather than approximating posterior inference through truncation, we frame the Rashomon set as a strategic ``seed'' to accelerate the computation of the full posterior. The central idea is that a pre-identified collection of high-evidence models can be embedded into a proposal distribution to initialize annealed importance sampling \citep[AIS,][]{gelman_xlmeng_1998, neal2001_ais} while maintaining full support over $\mathcal{M}$. Although AIS and its related variants has primarily been used to estimate marginal likelihoods \citep{Lartillot2006, Friel2008, Baele2012, Fan2011} for model comparison, its potential as a mechanism for sampling over a discrete model space for Bayesian model averaging, when initialized with the Rashomon set, remains, to the best of our knowledge, unexplored. Our proposed framework, termed \emph{Rashomon-seeded annealing}, provides a ``warm start'' by anchoring the initialization of AIS in the high-evidence models within the Rashomon set, and subsequently overcome BMA convergence bottlenecks without the inferential cost of truncation. The subsequent AIS correction restores globally consistent inference without requiring exhaustive enumeration of the model space.

We illustrate this general framework using factorial designs, where $\calM$ comprises permissible partitions of the feature space that group indistinguishable feature-level combinations \citep{apara2025, banerjeeTVA}. In these settings, the Rashomon Partition Set (RPS) provides a deterministic, geometry-aware enumeration of all partitions within a specified posterior threshold of the MAP model. As a certified seed set, the RPS is uniquely suited to initialize seeded annealing, allowing it to circumvent the mixing issues of multimodal partition spaces while retaining the rigor of full-posterior BMA. 
The remainder of this manuscript is organized as follows. Section~\ref{sec:annealing} develops the general Rashomon-seeded annealing framework. Section~\ref{sec:setup} details its application to Bayesian factorial designs. Section~\ref{sec:simulation} evaluates the proposed approach across various simulation settings, and Section~\ref{sec:data_analysis} demonstrates its utility using the NHANES telomere length, and charitable donation dataset. Finally, Section~\ref{sec:discussion} offers concluding remarks and suggests directions for future research. All code and datasets utilized in this work are available at \url{https://anonymous.4open.science/r/RashomonSeededAnnealing-DE5F}.

\section{Rashomon-seeded annealing}\label{sec:annealing}

The framework we develop here is targeted towards model spaces that possess a
structural geometry, typically the setting where the size of the model space $|\calM|$ grows combinatorially. In such spaces, a meaningful distance metric $d: \calM \times \calM \to \mathbb{Z}_{\geq 0}$ typically exists to quantify the number of elementary structural changes separating any two models, where $\mathbb{Z}_{\geq 0}$ corresponds to non-negative integers. The specific definition of $d$ remains context-dependent. For example, in variable selection, $d$ could be the Hamming distance between inclusion vectors; in spaces of trees, $d$ could be the Robinson–Foulds distance or the minimum number of subtree prune‑and‑regraft operations; in factorial designs (see Section~\ref{sec:setup}), models are permissible partitions of the feature space, and $d$ is the minimum number of cut‑swaps needed to transform one partition into another. Thus, given a $d$, we define the following. A metric $d$ naturally defines a $\delta$-neighborhood of any model $M$ by $\mathcal{N}_\delta(M) = \{M'\in\calM: d(M,M')\leq \delta\}$, for some $\delta \in \mathbb{Z}_{\geq 0}$.
\begin{definition}[Level sets]
  For any model $M \in \calM$ and a collection of models $\calS \subset \calM$, define $d(M, \calS) = \min_{M' \in \calS} \; d(M, M')$. Then, for $k = 0, 1, 2, \ldots$, the $k$‑th level set of $\calS$ is defined as
  \[
  \calL_k(\calS) = \left\{ M \in \calM : d(M, \calS) = k \right\}\;.
  \]
\end{definition}

\subsection{Annealed importance sampling}\label{sec:ais}
\subsubsection{Initial distribution from a seed set.}
Let $\calS \subset \calM$ denote a pre-identified non-empty collection of models, defined as a seed set, and assume that the unnormalized posterior $\tilde{p}(M \given \calD)$ is evaluable for any $M \in \calM$. To ensure global exploration while concentrating mass on high-evidence regions, we construct the unnormalized initial proposal $q_0(M; \calS)$ as a mixture over three components: the seed set $\calS$, its immediate level set $\mathcal{L}_1(\calS)$, and a uniform component over the entire model space $\calM$. For some $\alpha_1, \alpha_2 > 0$ with $\alpha_1 + \alpha_2 < 1$, define the unnormalized initial density
\begin{equation}\label{eq:q0}
  q_0(M ; \calS) \;=\;
  \alpha_1 \, \frac{\tilde{p}(M \given \calD)}{\sum_{M' \in \calS} \tilde{p}(M' \given \calD)}\, \mathbf{1}_{\calS}(M) + \alpha_2^{+} \, \frac{\mathbf{1}_{\mathcal{L}_1(\calS)}(M)}{|\mathcal{L}_1(\calS)|} + (1 - \alpha_1 - \alpha_2^{+}) \;,
\end{equation}
where $\tilde{p}(M \given \calD)$ denote the unnormalized posterior probabilities of model $\calM \in \calS$, and $\mathbf{1}_{A}(\cdot)$ denotes the indicator function over $A \subseteq \calM$, and $\alpha_2^{+} = \alpha_2\, \mathbb{I}(\calL_1(\calS) \neq \emptyset)$, thereby redistributing the mass to the global uniform component if the unit-neighborhood of the seed set is empty.. The third term is simply corresponds to an unnormalized uniform density over $\calM$. This construction guarantees $q_0(M; \calS) > 0$ for every $M \in \calM$, i.e., the proposal has full support over $\calM$. Importantly, evaluating $q_0(M; \calS)$ does not require enumerating $\calM$ or computing the normalizing constant of the uniform distribution; the unknown constant cancels in the self-normalized importance weights of the AIS algorithm, which we discuss later in the following sections, leaving the resulting posterior estimates unaffected.
Sampling from $q_0$ is straightforward: draw a uniform random variable $U \sim \Unif (0,1)$; if $U < \alpha_1$, sample $M$ from $\calS$ with probability proportional to $\tilde{p}(M \given \calD)$; if $\alpha_1 \leq U < \alpha_1 + \alpha_2$, sample uniformly from $\calL_1(\calS)$; otherwise, $M$ is sampled uniformly from the complete model space $\calM$, typically via a convenient representation that maintains a bijection to $\mathcal{M}$, such as the $\Sigma$-matrix for factorial partitions \citep{apara2025}.

\subsubsection{Algorithm}
Given the unnormalized initial density $q_0(M; \calS)$ defined in \eqref{eq:q0} and the unnormalized posterior $\tilde{p}(M \given \calD)$, we construct a sequence of intermediate distributions that bridge the initial density to the target posterior. Fix a temperature ladder $0 = b_0 < b_1 < \dots < b_T = 1$ and define the unnormalized intermediate densities for $t = 0, \ldots, T$ as
\begin{equation}\label{eq:bridge}
  f_t(M) = q_0(M; \calS)^{1-b_t}\, \tilde{p}(M \given \calD)^{b_t}, \quad M \in \calM \;.
\end{equation}
As $b_t$ increases, the influence of the posterior density grows while that of the initial decays, maintaining full support over $\calM$ at every intermediate step. Crucially, because the intermediate densities are unnormalized, their respective normalizing constants cancel within the importance weights. Thus, the algorithm requires only the pointwise evaluation of the unnormalized densities $f_t$.

To generate a single AIS trajectory, an initial model $M^{(0)}$ is sampled from $q_0(\cdot; \calS)$ and assigned a unit weight. At each temperature step $b_t$, the model state is refreshed via a Metropolis–Hastings transition kernel that targets the intermediate distribution $f_t$. This update employs a symmetric proposal $\phi(\cdot \mid M^{(t-1)})$ supported on the unit-neighborhood $\mathcal{N}_1(M^{(t-1)})$, a step that leaves the target $f_t$ invariant \citep{tokdar2010}. The importance weight for $M^{(t)}$ is subsequently updated recursively by the ratio $f_t(M^{(t-1)}) / f_{t-1}(M^{(t-1)})$. Repeating this process $J$ times yields a collection of properly weighted samples $\{(M_j, w_j)\}_{j=1}^J$, where $M_j$ denotes the terminal state of the $j$-th trajectory and $w_j$ the importance weight associated with $M_j$. See Algorithm~\ref{algo:seeded_ais} for details.
\begin{algorithm}[b]
  \caption{Seeded Annealed Importance Sampling}
  \label{algo:seeded_ais}
  \begin{algorithmic}[1]
    \State \textbf{Input:} seed set \(\calS\), level set \(\mathcal{L}_1(\calS)\),
           unnormalized posterior \(u(M)\), temperatures \(\{b_t\}_{t=1}^{T}\),
           number of trajectories \(J\), number of Metropolis-Hatings updates \(L\).
    \State Construct unnormalized initial proposal \(q_0(\cdot; \calS)\) via \eqref{eq:q0}.
    \For{\(j = 1,\ldots,J\)}
      \State Draw \(M_j^{(0)} \sim q_0\) (self‑normalized); set \(w_j = 1\).
      \For{\(t = 1,\ldots,T\)}
        \State Starting from \(M_j^{(t-1)}\), run \(L\) Metropolis–Hastings steps
               with symmetric proposal \(\phi(\cdot \given M_j^{(t-1)})\) targeting \(f_t\).
        \State Let \(M_j^{(t)}\) be the final state.
        \State \(w_j \leftarrow w_j \cdot f_t(M_j^{(t-1)}) \,/\, f_{t-1}(M_j^{(t-1)})\).
      \EndFor
      \State Store \((M_j, w_j)\) with \(M_j = M_j^{(T)}\).
    \EndFor
    \State \textbf{Output:} weighted sample \(\{(M_j, w_j)\}_{j=1}^J\)
           targeting the posterior proportional to \(u(M)\).
  \end{algorithmic}
\end{algorithm}

\subsection{Theoretical guarantees}\label{sec:theory}
Let $Q$ be the joint distribution of a model-weight pair $(M, w)$. By construction, independent chains yield i.i.d. replicates $(M_j, w_j) \sim Q$ for $j = 1, \ldots, J$, and we let $\mathbb{Q}$ denote the product probability measure on $(\calM \times (0, \infty))^{\mathbb{N}}$ induced by $Q$. The importance weights satisfy the fundamental unbiasedness property of annealed importance sampling \citep{gelman_xlmeng_1998, neal2001_ais}. That is, if $C_T = \sum_{M \in \calM} \tilde{p}(\cdot \given \mathcal{D})$ and $C_0 = \sum_{M \in \calM} q_0(M; \calS)$ are the normalizing constants of the unnormalized posterior $\tilde{p}(\cdot \given \calD)$ and the unnormalized initial proposal $q_0(\cdot; \calS)$, respectively, then for any bounded measurable function $\zeta : \calM \to \Real^p$, there exists $C = C_T / C_0 > 0$ such that $\E_{Q} \left[ w \, \zeta(M) \right] = C \, \E_{M \given \calD} \left[ \zeta(M) \right]$, where the expectations are taken component‑wise. In particular, $\E_{Q}[w] = C$. We assume that
$\E_{M \given \calD}[ \lVert \zeta(M) \rVert ] < \infty$ and define the self‑normalized estimator
\begin{equation}\label{eq:ais_estimator_general}
  \hat{\psi}_J = \frac{\sum_{j=1}^J w_j \, \zeta (M_j)}{\sum_{j=1}^J w_j} \;.
\end{equation}
\begin{theorem}[Consistency of AIS estimators]\label{thm:general}
  Assume $\E_{Q}[w] < \infty$ and $\E_{Q}[ w \, \lVert \zeta(M) \rVert ] < \infty$. Then $\hat{\psi}_J \;\xrightarrow{\text{a.s.}} \psi$, where $\psi = \mathbb{E}_{M \given \calD} [\zeta(M)]$, as $J \to \infty$, with almost‑sure convergence in $\mathbb{R}^p$ under $\mathbb{Q}$.
\end{theorem}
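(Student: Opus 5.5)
The plan is to apply the strong law of large numbers separately to the numerator and the denominator of \eqref{eq:ais_estimator_general}, and then combine the two limits with the continuous mapping theorem. Write
\[
\hat{\psi}_J = \frac{J^{-1}\sum_{j=1}^J w_j\,\zeta(M_j)}{J^{-1}\sum_{j=1}^J w_j}\;.
\]
Under $\mathbb{Q}$ the pairs $(M_j,w_j)$ are i.i.d.\ with law $Q$, so the terms $w_j\zeta(M_j)\in\Real^p$ and $w_j\in(0,\infty)$ are i.i.d.\ random vectors and scalars.

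\textbf{Denominator.} Since $w_j>0$ and $\E_Q[w]<\infty$, Kolmogorov's strong law gives $J^{-1}\sum_j w_j \to \E_Q[w]$ $\mathbb{Q}$-a.s. By the AIS unbiasedness property stated above, this limit equals $C=C_T/C_0$. This constant is strictly positive because $\tilde p(\cdot\given\calD)$ is a nonnegative unnormalized posterior with positive total mass, and $C_0<\infty$ is positive since $q_0>0$ on the finite space $\calM$.

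\textbf{Numerator.} Apply the strong law componentwise. For each coordinate $k$ we have $\E_Q|w\,\zeta_k(M)|\le \E_Q[w\lVert\zeta(M)\rVert]<\infty$. Hence, almost surely, $J^{-1}\sum_j w_j\zeta(M_j)\to \E_Q[w\,\zeta(M)]$. Intersecting the $p+1$ probability-one events still leaves an event of probability one. The unbiasedness identity then identifies the limit as $C\,\E_{M\given\calD}[\zeta(M)] = C\psi$.

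\textbf{Combination and the main obstacle.} The map $(a,b)\mapsto a/b$ is continuous on $\Real^p\times(0,\infty)$. On the probability-one event where both averages converge, the ratio therefore converges to $C\psi/C=\psi$; the denominator is positive for every $J$ and its limit $C$ is positive. This gives $\hat\psi_J\to\psi$ a.s.\ under $\mathbb{Q}$.

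The only delicate step is justifying that the identity $\E_Q[w\,\zeta(M)] = C\,\E_{M\given\calD}[\zeta(M)]$ holds for the $\zeta$ at hand. The paper states it for bounded $\zeta$, but the theorem only assumes integrability. Because $\calM$ is finite, every $\zeta$ is automatically bounded, so the stated property applies directly. If one wants to avoid relying on finiteness, I would extend the identity by truncation: apply it to $\zeta\mathbf{1}\{\lVert\zeta\rVert\le n\}$ and pass to the limit by dominated convergence. On the $Q$ side the dominating function is $w\lVert\zeta\rVert$, which is integrable by hypothesis. On the posterior side it is $\lVert\zeta\rVert$, which is integrable by the standing assumption $\E_{M\given\calD}\lVert\zeta(M)\rVert<\infty$.
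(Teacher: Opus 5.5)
Your proposal is correct and follows the same route as the paper's proof. You apply Kolmogorov's strong law componentwise to $J^{-1}\sum_j w_j\zeta(M_j)$ and $J^{-1}\sum_j w_j$, use the AIS unbiasedness identity to identify the limits as $C\psi$ and $C>0$, and then apply the continuous mapping theorem to the ratio. Your remark that the identity, stated only for bounded $\zeta$, still applies here (because $\calM$ is finite, or alternatively by truncation and dominated convergence) fills in a step the paper uses without comment.
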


\begin{corollary}[Posterior distribution function]\label{cor:posterior_func}
  Let $\theta$ be a scalar parameter of interest and denote the conditional posterior distribution function of $\theta$ under model $M$ by $P(\theta \given \calD, M)$. Then
  \[
    \hat{P}_J(\theta) \;\xrightarrow{\;\text{a.s.}\;} \;
    P(\theta \given \calD) = \sum_{M \in \calM} P(\theta \given \calD, M) \, p(M \given \calD) \;,
  \]
  where $\hat{P}_J(\theta) = \frac{\sum_{j=1}^J w_j \, P(\theta \given \calD, M_j)}{\sum_{j=1}^J w_j}$, which is obtained by plugging in $\zeta(M) = p(\theta \given \calD, M)$ in \eqref{eq:ais_estimator_general}.
\end{corollary}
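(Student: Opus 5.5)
The corollary follows from Theorem~\ref{thm:general} applied, for each fixed $\theta$, to the scalar function $\zeta_\theta(M) = P(\theta \given \calD, M)$, i.e.\ the conditional distribution function. The statement's phrase ``$\zeta(M) = p(\theta \given \calD, M)$'' should be read this way, with $p$ standing for the distribution function. The plan is to check the hypotheses of the theorem, identify the limit $\psi$, and then, if a uniform-in-$\theta$ statement is wanted, upgrade pointwise convergence with a Glivenko--Cantelli style argument.

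\textbf{Checking the hypotheses.} Fix $\theta \in \Real$. The map $M \mapsto \zeta_\theta(M)$ is a function on the finite (hence discrete, trivially measurable) set $\calM$. It takes values in $[0,1]$ because it is a distribution function, so it is bounded.

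Two integrability conditions are needed:
\begin{itemize}
\item $\E_{M \given \calD}[|\zeta_\theta(M)|] \le 1 < \infty$, which is immediate from boundedness.
\item $\E_Q[w\,|\zeta_\theta(M)|] \le \E_Q[w] = C < \infty$. This follows from the AIS unbiasedness identity stated before the theorem: taking $\zeta \equiv 1$ gives $\E_Q[w] = C_T/C_0$, and this is finite because $\calM$ is finite and $q_0 > 0$ on $\calM$.
\end{itemize}
Hence all assumptions of Theorem~\ref{thm:general} hold, and $\hat P_J(\theta) \to \psi(\theta)$ $\mathbb{Q}$-a.s.

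\textbf{Identifying the limit.} Since $\calM$ is finite,
\[
\psi(\theta) = \E_{M \given \calD}[P(\theta \given \calD, M)] = \sum_{M \in \calM} P(\theta \given \calD, M)\, p(M \given \calD).
\]
This equals $P(\theta \given \calD)$ by the law of total probability, which is the BMA identity \eqref{eq:bma} applied to the event $\{\vartheta \le \theta\}$.

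\textbf{Uniform convergence (main obstacle).} The only delicate point is the quantifier. The theorem gives a null set depending on $\theta$, while a statement about the whole distribution function needs one null set that works for all $\theta$.

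First, intersect the null sets over $\theta \in \mathbb{Q}$ together with the countably many atoms of the finitely many $P(\cdot \given \calD, M)$. This gives a single probability-one event on which convergence holds on a dense set that includes all jump points.

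Second, use that each $\hat P_J$ is a nondecreasing, right-continuous convex combination of distribution functions with values in $[0,1]$. The standard Glivenko--Cantelli sandwich then extends the convergence to every $\theta$, and in fact gives uniform convergence in $\theta$. For $\epsilon > 0$, choose a finite grid on which the limit increases by less than $\epsilon$ between consecutive points, and bound $\hat P_J$ at any $\theta$ by its values at the neighboring grid points and their left limits. The left limits converge by the same theorem applied to $\zeta(M) = P(\theta^- \given \calD, M)$.

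If only pointwise convergence is intended, the argument stops after identifying the limit.
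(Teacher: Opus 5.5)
Your argument is correct and matches the paper's proof. Fix $\theta$, take $\zeta_\theta(M)=P(\theta \given \calD, M)\in[0,1]$, note that $\E_Q[w\,|\zeta_\theta(M)|]\le \E_Q[w]=C<\infty$, and apply Theorem~\ref{thm:general} to get the pointwise limit $\sum_{M} P(\theta\given\calD,M)\,p(M\given\calD)$. The uniform-in-$\theta$ upgrade is not needed for this corollary; the paper proves it separately in Corollary~\ref{cor:uniform}, more simply than your Glivenko--Cantelli sandwich, via the bound $\sup_\theta|\hat P_J(\theta)-P(\theta\given\calD)|\le\sum_{M\in\calM}|\hat p_J(M)-p(M\given\calD)|$ over the finite $\calM$, whereas your route does not rely on finiteness of $\calM$.
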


\begin{corollary}[Uniform convergence and quantile consistency]\label{cor:uniform}
  Under the assumptions of Theorem~\ref{thm:general},
  \[
    \sup_{\theta \in \mathbb{R}} \bigl| \hat{P}_J(\theta) - P(\theta \given \calD) \bigr|
    \xrightarrow{\text{a.s.}} 0,
  \]
  as $J \to \infty$.  If, in addition, $P(\theta \given \calD)$ is continuous and
  strictly increasing in a neighbourhood of its $\alpha$-quantile $q_\alpha$, then the
  empirical quantile $\hat{q}_{\alpha,J} = \inf\{\theta : \hat{P}_J(\theta) \ge \alpha\}$
  satisfies $\hat{q}_{\alpha,J} \xrightarrow{\text{a.s.}} q_\alpha$.
\end{corollary}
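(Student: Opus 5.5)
The plan is a Glivenko–Cantelli-type argument. Pointwise almost-sure convergence comes from Theorem~\ref{thm:general} on a countable dense set of points, and is then upgraded to uniform convergence using monotonicity and boundedness of distribution functions. Quantile consistency then follows from the standard inversion argument.

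\emph{Step 1 (pointwise convergence on a countable set).} Fix $\theta$ and take $\zeta(M) = P(\theta \given \calD, M) \in [0,1]$. This $\zeta$ is bounded, so the integrability conditions of Theorem~\ref{thm:general} reduce to $\E_Q[w] = C < \infty$. Theorem~\ref{thm:general} therefore gives $\hat{P}_J(\theta) \to P(\theta \given \calD)$ almost surely; this is Corollary~\ref{cor:posterior_func}. The same holds for left limits: using $\zeta(M) = P(\theta^- \given \calD, M)$, we get $\hat{P}_J(\theta^-) \to P(\theta^- \given \calD)$ almost surely. Let $D$ be the countable set consisting of the rationals together with the grid points introduced in Step 2. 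Since $D$ is countable, there is a single $\mathbb{Q}$-null set off which all these convergences hold simultaneously for every point of $D$.

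\emph{Step 2 (upgrade to uniform convergence).} Each $P(\cdot \given \calD, M)$ is a distribution function. The weights $w_j/\sum_k w_k$ are nonnegative and sum to one, so $\hat{P}_J$ is also a distribution function: nondecreasing, right-continuous, with limits $0$ and $1$ at $\mp\infty$. The same is true of the limit $P(\cdot \given \calD)$, since it is a convex combination of distribution functions over the finite set $\calM$.

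Fix $\varepsilon > 0$ and choose a finite grid $-\infty = t_0 < t_1 < \dots < t_m = \infty$ such that $P(t_i^- \given \calD) - P(t_{i-1} \given \calD) \le \varepsilon$ for each $i$. Such a grid exists by the usual construction in the Glivenko–Cantelli proof, which handles atoms of $P$ through the left limits. For $\theta \in [t_{i-1}, t_i)$, monotonicity gives
\[
\hat{P}_J(\theta) - P(\theta \given \calD) \le \hat{P}_J(t_i^-) - P(t_i^- \given \calD) + \varepsilon,
\]
and symmetrically
\[
\hat{P}_J(\theta) - P(\theta \given \calD) \ge \hat{P}_J(t_{i-1}) - P(t_{i-1} \given \calD) - \varepsilon.
\]
Taking the maximum over the finitely many grid points and letting $J \to \infty$ gives $\limsup_J \sup_\theta |\hat{P}_J - P| \le \varepsilon$ almost surely. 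Letting $\varepsilon \downarrow 0$ along a countable sequence, so that only countably many null sets are excluded, yields the uniform statement.

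\emph{Step 3 (quantiles).} Let $\eta > 0$ be small enough that $P$ is continuous and strictly increasing on $[q_\alpha - \eta, q_\alpha + \eta]$. Then
\[
P(q_\alpha - \eta \given \calD) < \alpha < P(q_\alpha + \eta \given \calD).
\]
By Step 2, almost surely for all large $J$ we have $\hat{P}_J(q_\alpha - \eta) < \alpha \le \hat{P}_J(q_\alpha + \eta)$. By the definition of the generalized inverse and the monotonicity of $\hat{P}_J$, this gives $q_\alpha - \eta < \hat{q}_{\alpha,J} \le q_\alpha + \eta$. Since $\eta$ can be taken arbitrarily small along a countable sequence, $\hat{q}_{\alpha,J} \to q_\alpha$ almost surely.

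\emph{Main obstacle.} I expect the delicate point to be Step 2's handling of atoms: the model-specific posteriors may not be continuous, so the grid must be built using left limits. This requires applying Theorem~\ref{thm:general} to $\zeta(M) = P(\theta^- \given \calD, M)$ as well, and that is why the left-limit version is included in Step 1. Everything else follows from the boundedness of $\zeta$ and the fact that $\hat{P}_J$ is a bona fide distribution function.
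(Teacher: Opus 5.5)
Your proof is correct, but for the uniform part it takes a different route from the paper. The paper never builds a grid. It uses the finiteness of $\calM$ to write $\hat{P}_J(\theta) = \sum_{M \in \calM} P(\theta \given \calD, M)\,\hat{p}_J(M)$, where $\hat{p}_J(M) = \sum_j w_j \mathbf{1}(M_j = M) / \sum_j w_j$. Since each conditional distribution function lies in $[0,1]$, this gives $\sup_\theta \bigl|\hat{P}_J(\theta) - P(\theta \given \calD)\bigr| \le \sum_{M \in \calM} \bigl|\hat{p}_J(M) - p(M \given \calD)\bigr|$. The right-hand side tends to zero almost surely by Theorem~\ref{thm:general}, applied to the finitely many bounded indicators $\zeta(M') = \mathbf{1}(M' = M)$.

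That argument is shorter and avoids atoms and left limits entirely. It also gives more: the same bound holds uniformly over any family of $[0,1]$-valued functionals of $M$ (e.g.\ simultaneously over all cells and all $\theta$), and it needs no monotonicity in $\theta$.

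Your Glivenko--Cantelli bracketing uses only pointwise almost-sure convergence at countably many points plus the monotonicity of $\hat{P}_J$, so it never uses the finiteness of $\calM$. The paper's bound extends to countable $\calM$ via Scheffé's lemma, but yours would work even for a general, uncountable model space where model-level weights cannot be controlled in $\ell_1$. The price is the grid construction and the extra left-limit applications of Theorem~\ref{thm:general}, which you handle correctly.

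The quantile step is essentially the same inversion argument in both proofs. In fact it needs only convergence at the countably many points $q_\alpha \pm \eta$, not full uniformity.
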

The proofs of the results are in Appendix~\ref{sec:consistency}. The unbiasedness of the importance weights is a standard property of annealed importance sampling \citep{neal2001_ais, gelman_xlmeng_1998}, and the consistency argument follows from the strong law of large numbers applied to the self‑normalized estimator; the proofs rely only on the full support of $q_0$ and the validity of the intermediate Metropolis–Hastings kernels, not on any special property of the seed set $\mathcal{S}$. 

In practice, however, selecting a seed set from high-posterior regions, such as the Rashomon set, drastically improves efficiency; an initial distribution near the target requires fewer temperatures ($T$) and updates ($L$) to maintain a stable effective sample size. In practice, we utilize the self-normalized weights $\bar{w}_j = w_j / \sum_{j'} w_{j'}$ to approximate posterior moments, marginal densities, and credible intervals via the estimator $\hat{P}_J(\theta) = \sum \bar{w}_j P(\theta \mid \mathcal{D}, M_j)$. To minimize weight variance \citep{Gneiting2007}, we recommend a log-equispaced temperature ladder \citep{CALDERHEAD2009, Zhou2016} or an adaptive pilot run to preserve effective sample size throughout the bridge \citep[Appendix~\ref{sec:adaptive_ladder};][]{cameron_2019}.

\section{Bayesian factorial designs and the Rashomon Partition Set}\label{sec:setup}

Let $y = (y_1, \dots, y_n)^\T$ denote the vector of $n$ observations, and let $X$ be an $n \times M$ matrix whose $i$-th row $x_i = (x_{i1}, \dots, x_{iM})^\T$ records the levels of $M$ features for observation $i$. Here $\calD \coloneqq (y, X)$. In general, each feature $m \in \{1, \ldots, M\}$ is partially ordered and takes one of $R_m$ levels, resulting in $K = \prod_{m=1}^M R_m$ distinct factor-level combinations, or cells, which comprise the set $\mathcal{K}$. However, for notational clarity and without loss of generality, we assume a constant number of levels, $R$, for all features, as the underlying methodology remains analogous. 

To discover interpretable interaction effects within this setting, we consider the \emph{treatment variant aggregation} framework \citep{banerjeeTVA}, where the central task is to partition $\calK$ by merging cells whose expected outcomes are statistically indistinguishable. Such aggregation must respect the partial order, i.e., only cells that are comparable under increasing levels can be grouped, ensuring that aggregated units remain scientifically meaningful (see Appendix~\ref{sec:geometry_app} for details). We define a \emph{pool} as a collection of cells that are constrained to share the same cell means, where a cell mean is the expected outcome under the feature combination of the cell. A \emph{partition} $\Pi$ is a collection of disjoint pools that cover $\calK$. Partitions must be \emph{permissible}, that is, their pools must be contiguous in the partial order and satisfy parallel‑split constraints (see
Figure~\ref{fig:hasse_main}).
\begin{figure}[t]
  \centering
  % First Subfigure
  \begin{subfigure}[b]{0.32\textwidth}
    \centering
    \includegraphics[width=\textwidth]{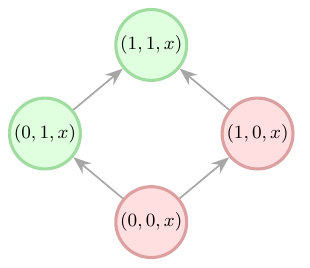}
    \caption{Permissible}
    \label{fig:hasse1}
  \end{subfigure}
  \hfill % Adds spacing between the images
  % Second Subfigure
  \begin{subfigure}[b]{0.32\textwidth}
    \centering
    \includegraphics[width=\textwidth]{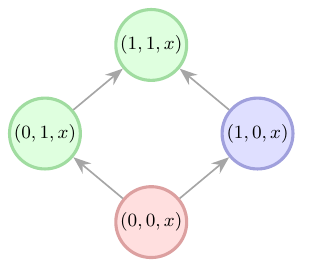}
    \caption{Not permissible}
    \label{fig:hasse2}
  \end{subfigure}
  \hfill
  % Third Subfigure
  \begin{subfigure}[b]{0.32\textwidth}
    \centering
    \includegraphics[width=\textwidth]{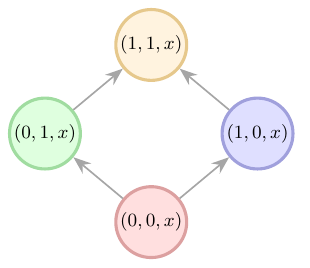}
    \caption{Permissible}
    \label{fig:hasse3}
  \end{subfigure}
  \caption{Hasse diagrams for Example~\ref{ex:factorial_defs} with the third feature fixed at level $x$. (a) A permissible partition with two pools; (b) a non-permissible partition with three pools; and (c) the saturated case where each cell forms its own pool. Distinct colors distinguish separate pools.}
  \label{fig:hasse_main}
\end{figure}
We detail the permissibility rules in Appendix~\ref{sec:geometry_app}. Here, all permissible partitions $\calP^\ast$ instantiates the model space $\mathcal{M}$.
\begin{example}[Pools and permissibility]\label{ex:factorial_defs}
  Figure~\ref{fig:hasse_main} shows a design with $M=3$ features, each taking
  levels $0$ or $1$, with $0\prec 1$, where $\prec$ denotes the ``order''. The Hasse diagrams depict the eight possible cells, but for visual clarity we focus on the slice where the third feature is fixed at level $x$, leaving four displayed cells. Figure~\ref{fig:hasse1} shows a permissible partition with two pools: $\{(0,0,x), (1,0,x)\}$ (blue) and $\{(0,1,x), (1,1,x)\}$ (red). This pooling states that the first feature has no effect on the expected outcome, regardless of the second feature's level. Figure~\ref{fig:hasse2} is non‑permissible, with the singleton pool $(1,0,x)$ (red) and the two‑cell pool $\{(0,1,x),(1,1,x)\}$ (blue) violating the parallel‑split rule. Figure~\ref{fig:hasse3} is the saturated permissible partition where every cell is its own pool.
\end{example}

\subsection{Bayesian hierarchical model}\label{sec:bayes_model}
Given a permissible partition $\Pi = \{\pi_1, \dots, \pi_H\}$, the cell means $\beta \in \Real^K$ is constrained such that for each $h = 1, \ldots, H$, if feature combinations $k^{(1)}, k^{(2)} \in \pi_h$, then $\beta_{k^{(1)}} = \beta_{k^{(2)}}$. Let $\gamma \in \Real^H$ denote pool means, where $\gamma_h$ represents the common expected value for all cells in pool $\pi_h$. Then we bridge the cell means $\beta$ and the pool means $\gamma$ via a binary transformation matrix $\Lambda_{\Pi} \in \{0, 1\}^{K \times H}$, as $(\Lambda_{\Pi})_{kh} = 1$ if $k \in \pi_h$, and 0 otherwise.
Since $\Pi$ is a valid partition, each row of $\Lambda_{\Pi}$ contains exactly one entry of $1$, ensuring that every feature combination maps to a unique pool. Hence, we can write the cell means as a linear transformation of the pool means, $\beta = \Lambda_{\Pi} \gamma$. Let $\tilde{X}$ be the $n\times K$ binary indicator matrix with $\tilde{X}_{ik}=1$ iff observation $i$ belongs to cell $k$. Moreover, we write $\tilde{X}_\Pi \coloneqq \tilde{X} \Lambda_\Pi$. Subsequently, we define the Bayesian hierarchical factorial model as
\begin{equation}\label{eq:hier}
\begin{split}
    y \given \gamma, \Pi, \sigma^2 &\sim \Normal (\tilde{X}_\Pi \gamma, \sigma^2 I_n)\;,\\
    \gamma \given \Pi, \sigma^2 &\sim \Normal (0, g\sigma^2 (\tilde{X}_\Pi^\T \tilde{X}_\Pi)^{-1}),\quad 
    p(\Pi) \propto \exp \{- \lambda |\Pi|\}\;,
\end{split}
\end{equation}
where $\lambda >0$, and $\lvert \Pi \rvert$ denotes the number of distinct pools in $\Pi$, i.e., the size of the partition $\Pi$. For pool means $\gamma$, we assign a $g$-prior \citep{ZellnerSiow1980, Zellner1986}. By construction, we have $\tilde{X}_\Pi^\T \tilde{X}_\Pi = \mathrm{diag} (n_1, \ldots, n_H)$, where $n_h > 0$ denotes the number of observations in pool $\pi_h \in \Pi$ for each $h$ (see Proposition~\ref{prop1} in Appendix~\ref{sec:appendix_posterior}). Hence, the $g$-prior introduces a shrinkage that stabilizes the estimates of the pool means, adjusting for the number of observations in each pool, while preserving conjugacy. Additionally, the prior on $\Pi$ plays a regularizing role by favoring less granular aggregations. It essentially corresponds to an $\ell_0$ penalty, similar in spirit to the Occam's window \citep{Madigan1994, madigan1996bma}, where conditional on the number of pools in a partition, all permissible partitions in $\calP^\ast$ are equally likely. As we focus on $\calP^\ast$, we treat $\sigma^2$ as a nuisance parameter. We estimate $\sigma^2$ separately from \eqref{eq:hier}, e.g., estimated from the saturated model (see Figure~\ref{fig:hasse3}) and thereafter assumed to be fixed in the hierarchical model in \eqref{eq:hier}.

\subsubsection{Posterior distribution} 
Assume $\sigma^2$ is known. Then the joint posterior distribution for \eqref{eq:hier} is written as $p(\beta, \Pi \given \calD) = p(\beta \given \Pi, \calD) \times p(\Pi \given \calD)$. For $k = 1, \ldots, K$, given a partition $\Pi = \{ \pi_1, \ldots, \pi_H \}$, the theory of conjugate priors yields
\begin{equation}\label{eq:post_beta}
    \beta_k \given \Pi, \calD \sim \Normal \left( \frac{g}{g+1} \hat\gamma_{\Pi, h}, \frac{g}{g+1} \cdot \frac{\sigma^2}{n_h} \right), \quad \text{if } k \in \pi_h  \;,
\end{equation}
where $\hat{\gamma}_{\Pi, h} = \frac{1}{n_h} \sum_{i:k(i) \in \pi_h} y_i$, with $k(i) \in \calK$ denoting the feature combination corresponding to the observation $y_i$. See Appendix~\ref{sec:gamma_app} for details. Unlike the conditional posterior of $\beta$, the model posterior $p(\Pi \given \calD)$, does not admit a standard distributional form. We derive a closed form for $p(\Pi \given \calD)$ by integrating out $\gamma$ from \eqref{eq:hier}, assuming $\sigma^2$ fixed. Subsequently, we show that
\begin{equation}\label{eq:QPi}
    \arg \max_{\Pi \in \calP^\ast} p(\Pi \given \calD) = \arg \min_{\Pi \in \calP^\ast} \left( \tfrac{1}{n} \mathrm{SSE}_\Pi + \tilde{\lambda} |\Pi| \right) \;,
\end{equation}
where $\tilde{\lambda} > 0$, and $\mathrm{SSE}_{\Pi} = \sum_{h = 1}^H \sum_{i:k(i) \in \pi_h} (y_i - \hat{\gamma}_h)^2$ (see Appendix~\ref{sec:equivalence_app} for details). The term linear in $|\Pi|$ essentially incorporates the $\ell_0$ penalty from the prior. %Hence, the maximum a posteriori partition $\Pi^{\text{MAP}}$ is the minimizer of $Q(\Pi)$, and the posterior is recovered exactly as $p(\Pi \given Z) \propto \exp\{-Q(\Pi)\}$. Given the marginal posterior probability of each permissible partition, posterior inference of the cell means proceed via \eqref{eq:bma}.

\subsection{Rashomon-seeded annealing for factorial designs}\label{sec:rps}
For the Bayesian factorial model in \eqref{eq:hier}, \citet{apara2025} develop an algorithm to efficiently enumerate the Rashomon set $\mathcal{R}_\epsilon$, as described in \eqref{eq:rashomon_set}, within the space of permissible partitions $\calP^\ast$, and termed as Rashomon Partition Set (RPS). While posterior inference is subsequently restricted to $\mathcal{R}_\epsilon$, such a truncation risks underestimating model uncertainty. We address this limitation by utilizing the RPS exclusively as a seed set, i.e., $\calS = \mathcal{R}_\epsilon$, for the seeded annealing framework introduced in Section~\ref{sec:annealing}.

To instantiate seeded annealing in this setup, we equip $\calP^\ast$ with a suitable distance metric $d$. Following \citet{apara2025}, any permissible partition is uniquely identified by a canonical cut-incidence matrix $\Sigma$, where entries indicate the positions of cuts along the Hasse diagram edge families (see Figure~\ref{fig:hasse_main}). Here, the distance $d(\Pi_1, \Pi_2)$ is defined as the edit distance between the corresponding $\Sigma$-matrices. Under this metric, the unit-neighborhood of a partition consists of all permissible partitions reachable by flipping a single entry of its $\Sigma$-matrix. Consequently, the level set $\calL_1(\mathcal{R}_\epsilon)$ in the initial distribution \eqref{eq:q0} comprises all permissible partitions exactly one ``cut-swap'' away from some member of the RPS, $\mathcal{R}_\epsilon$. Thus, with the seed set $\calS=\mathcal{R}_\epsilon$ and the metric $d$ defined above, the initial distribution $p_0(\cdot;\mathcal{R}_\epsilon)$ follows directly from \eqref{eq:q0}, and the seeded annealing algorithm, Algorithm~\ref{algo:seeded_ais} applies without modification. Upon sampling $J$ independent AIS trajectories, we obtain partition-weight pairs $\{(\Pi_j, w_j)\}_{j=1}^J$. We subsequently estimate the marginal posterior of the cell means, $p(\beta \given \calD)$, by applying the AIS estimator \eqref{eq:ais_estimator_general} with the choice $\zeta(\Pi) = p(\beta \given \Pi, \calD)$. Given that the conditional posterior $p(\beta \given \Pi, \calD)$ is Gaussian per \eqref{eq:post_beta}, the resulting AIS estimate for $p(\beta \given \calD)$ is essentially as a Gaussian mixture weighted by the importance weights $\{w_j\}_{j=1}^J$. This analytical tractability facilitates straightforward derivation of downstream posterior summaries, such as the mean and credible intervals, while accounting for the full scope of model uncertainty across $\mathcal{P}^\ast$.

\section{Simulation}\label{sec:simulation}

We evaluate the robustness of our framework by investigating how the initial seed set affects the final AIS inference. Our goal is to see how well our proposed algorithm estimates relevant summary statistics of the full posterior distribution given a RPS of varying size as the seed set, and to compare its performance with other methods. This attempts to show whether AIS provides a reliable correction across different settings or if it remains sensitive to the boundaries of the starting Rashomon set. Besides RPS-truncated inference and the annealing, we also evaluate a PAC-Bayesian (PB) estimator, which serves as a non-annealed competitor utilizing high-density regions for inference. We evaluate the framework across two simulated scenarios of increasing complexity. \emph{Scenario~1} features a modest model space with $M=2$ features at $R_1=4$ and $R_2= 3$ levels ($K=12$ cells), allowing for the computation of the exact posterior via exhaustive enumeration as a definitive baseline. \emph{Scenario~2} considers the setting with $M=3$ features at $R_1 = 4$, $R_2 = 3$ and $R_3 = 3$ levels ($K=36$ cells); here, the combinatorial growth of $\calM$ necessitates the use of an extensive MCMC run (approximately $10^6$ iterations) to provide a ``ground truth'' surrogate. 

Across 50 independent replications, we evaluate the precision of the resulting posterior summaries, specifically the mean and 95\% credible intervals, using a mixture quantile solver (see Appendix~\ref{sec:ais_app}). All estimators ultimately represent the posterior of $\beta$ as a weighted mixture of Gaussian densities (see Section~\ref{sec:setup}); what sets them apart, however, is the strategy they use to explore the model space $\calM$. For the MCMC benchmark, we employ a Metropolis–Hastings algorithm that refreshses the model state by flipping a single, randomly selected entry of the canonical cut-incidence matrix $\Sigma$. Because $\Sigma$ maintains a bijection with $\mathcal{P}^\ast$, every such proposal is guaranteed to be permissible, circumventing the need for complex feasibility checks during the random walk.

\subsubsection{Inferential Accuracy and Uncertainty Quantification}
We evaluate the precision of the estimated posterior summaries relative to the reference posterior --- the exact distribution for Scenario~1 and the MCMC posterior for Scenario~2. For each summary statistic (posterior mean, $2.5\%$, and $97.5\%$ quantiles), we compute the $L_1$ error as the mean of absolute deviations from the corresponding reference value across all cells. 
Notably, we observe that the accuracy of Rashomon-seeded annealing (henceforth AIS) remains remarkably stable as the Rashomon threshold $\epsilon$ increases (see Figure~\ref{fig:accuracy_scenario2}). 
\begin{figure}[t]
    \centering
    \includegraphics[width=0.95\linewidth]{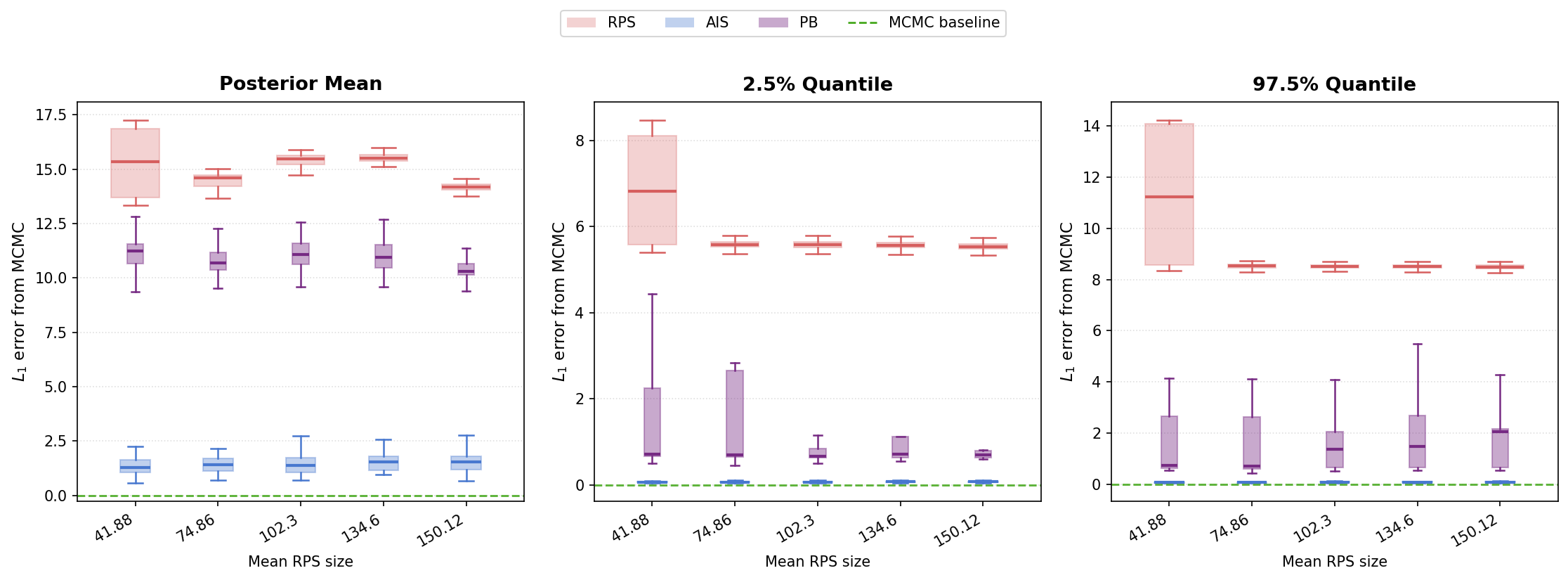}
    \caption{Inferential accuracy relative to the exact posterior: $L_1$ deviation of posterior summaries across varying Rashomon thresholds $\epsilon$. Rashomon-seeded annealing (AIS) consistently outperforms RPS-truncation and PAC-Bayesian (PB), approaching the MCMC posterior as the seed set expands.}
    \label{fig:accuracy_scenario2}
\end{figure}
This indicates that the annealed correction is robust to the initial seed volume; even a small RPS provides a sufficient warm start for the AIS to navigate the model space effectively. Consequently, one can achieve considerably good uncertainty quantification without the computational burden of enumerating an expansive Rashomon set for a seed.
To further assess the fidelity of uncertainty quantification, we compute the Intersection-over-Union (IoU) metric between the estimated and reference credible intervals. To be specific, for any two intervals $I_1$ and $I_2$, $\mathrm{IoU}(I_1, I_2) = {|I_1 \cap I_2|}/{|I_1 \cup I_2|}$, where $|\cdot|$ denotes the Lebesgue measure (the total length) of the set. An IoU of 1 indicates perfect alignment with the reference interval, whereas lower values signify distortions from the reference interval. In Scenario~2, Rashomon-seeded annealing (henceforth AIS) demonstrates a superior ability to recover the global posterior mass compared to the localized alternatives(see Figure~\ref{fig:IoUtime_scenario2}). 
\begin{figure}[t]
    \centering
    \includegraphics[width=0.75\linewidth]{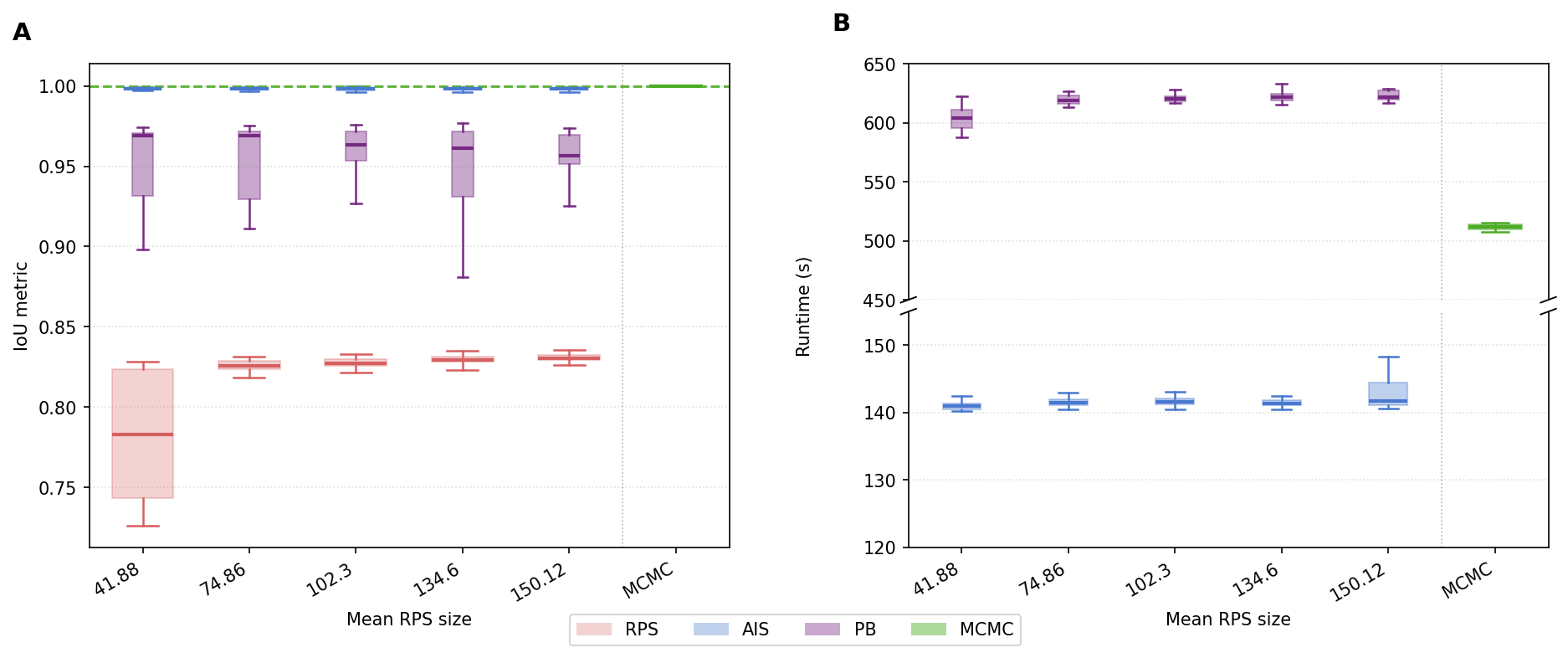}
    \caption{\textbf{A}:Comparison between the loU metric for agreement of intervals produced by MCMC, for RPS, AIS, PB with different $\epsilon$; \textbf{B}:running time for AIS and PB for different $\epsilon$, and MCMC}
    \label{fig:IoUtime_scenario2}
\end{figure}
While both AIS and PB improve upon the RPS-truncated inference, AIS achieves substantial reductions in $L_1$ error across all summaries, including the tail quantiles. In contrast, PB provides only modest gains, primarily centered on the posterior mean. This discrepancy is most pronounced in the IoU metrics. AIS produces credible intervals that consistently achieve high agreement with the MCMC reference, effectively restoring the model uncertainty lost to truncation. Conversely, PB yields only marginal improvements over the RPS for posterior means, but improves competitively on the tail quantiles. We observe similar results in Scenario~1 (see Appendix~\ref{sec:appendix_simulation}).

\subsubsection{Runtime Comparison}
We evaluate the computational cost of each method by comparing the average wall-clock runtime across 50 replications. 
In Scenario~2, while restricted to the RPS only yields poor inference, Rashomon-seeded annealing maintains a highly favorable efficiency profile, requiring approximately 180 seconds on average. This represents roughly a 70\% reduction in runtime compared to the MCMC benchmark, while achieving nearly identical posterior accuracy. We find the PAC-Bayesian estimator to be the most computationally demanding, averaging over 800 seconds, exceeding even the MCMC. This significant overhead stems from the greedy search algorithm inherent in identifying the localized concentration set, which necessitates repeated evaluations of the neighboring model cadidates. Ultimately, Rashomon-seeded annealing provides an effective balance between accuracy and efficiency, as it estimates the full posterior characteristics lost to truncation while preserving a substantial speed advantage over traditional sampling-based methods. All computations were performed on Linux x86\_64 systems (kernel version 4.18.0) equipped with AMD EPYC Genoa processors and approximately 1.5 TB RAM per node.

\section{Application to real data}\label{sec:data_analysis}
We demonstrate the practical utility of Rashomon-seeded annealing using two datasets: (1) telomere length determinants from the National Health and Nutrition Examination Survey (NHANES, \href{https://wwwn.cdc.gov/nchs/nhanes/search/datapage.aspx?Component=Laboratory&Cycle=1999-2000}{Dataset URL}) and (2) a charitable donation dataset \citep{karlan2007does}. 
% In both cases, we show that restricting inference to a truncated Rashomon set can lead to spurious conclusions regarding heterogeneity, whereas our annealed correction provides a more robust and conservative assessment of model uncertainty.
To quantify the stability of our findings across the model space, we utilize the confidence measure proposed in prior RPS literature \citep{apara2025}. This metric calculates the posterior-weighted probability that a specific quantity of interest (such as the difference in telomere length relative to a baseline) falls within specified regions of the cell mean distribution. We partition the range of possible cell means based on the standard deviation of the observed values, including a dedicated region representing a null effect A robust conclusion is signaled by a high confidence mass in a single interval, while ``fragility'', i.e., the sensitivity of results to model choice, is indicated by mass spread across multiple intervals. Besides, the confidence measure, we also compute the $L_1$ error, effective sample size (ESS) ratio, and IoU, which are presented in Table~\ref{tab:single_run}.

\subsection{Analysis of Telomere Length Heterogeneity}\label{sec:nhanes}

We analyze telomere length determinants using NHANES data (1999–2002), focusing on how socioeconomic and demographic factors, such as race, age, gender, education, and work-related stress, influence the T/S ratio, a critical biomarker of cellular aging and immune dysfunction.
Results from the NHANES dataset highlight a critical distinction between truncated and global model exploration. While an initial RPS-restricted analysis suggests several counterintuitive findings, such as Black males possessing longer telomeres than females and a positive correlation between age and telomere length in White populations, analysis using our Rashomon-seeded annealing recalibrates these conclusions (see Figure~\ref{fig:nhanes_difference_heatmap} in the Appendix). By sampling 300 AIS trajectories, we access a broader region of the model space beyond the initial $2,435$ RPS models, revealing that these ``spurious'' effects are likely artifacts of a narrow inferential structure. Notably, the PAC-Bayesian (PB) approach fails to capture these global shifts, yielding results largely identical to the truncated RPS (see Table~\ref{tab:single_run}). Ultimately, by exploring beyond the Rashomon set, our algorithm provides a more robust discovery of heterogeneity, ensuring reported effects are not merely symptoms of model-space truncation.

\subsection{Analysis of the Charitable Donation dataset}
We further evaluate our framework using the Charitable Donation dataset to analyze treatment effect heterogeneity in targeted fundraising. The outcome, individual donation responses, is modeled as a function of demographic and behavioral covariates, including prior engagement and donation history. These features define discrete targeting policies, which are grouped into profiles for pooled inference (see Appendix~\ref{sec:geometry_app}). As shown in Table~\ref{tab:single_run}, Rashomon-seeded annealing achieves competitive accuracy: an $L_1$ error of 0.01 and an IoU of 0.99 relative to the MCMC reference posterior, significantly outperforming the truncated RPS. While the PAC-Bayesian (PB) approach achieves slightly higher accuracy, its runtime is nearly double that of our AIS framework. Ultimately, the AIS correction provides a highly efficient and accurate recovery of posterior uncertainty, essential for robust policy decisions in targeted outreach.

\begin{table}[t]
\centering
\caption{Performance across real-world datasets: $L_1$ error measures discrepancy in posterior mean and IoU measures agreement of credible intervals to that of the MCMC reference posterior (lower $L_1$ and higher IoU are better). The results highlight the trade-off between explored states ($n_{\text{states}}$) and computational efficiency. %Bold values indicate the best non-reference performance.
}
\label{tab:single_run}
\begin{tabular}{lcccccr}
\toprule
{Dataset} & {Method} & $n_{\text{states}}$ & {ESS Ratio} & $L1$ {Error} & {IoU} & {Runtime (s)} \\
\midrule
\multirow{4}{*}{%
\begin{tabular}{@{}c@{}} Charitable \\ Donation \end{tabular}%
}
& RPS & 123  & 1.000 & 0.050 & 0.997 & 6.8 \\
& PB  & 423  & 1.000 & 0.007 & 1.000 & 1373.8 \\
& AIS & 300  & 0.324 & 0.010 & 0.991 & 1199.2 \\
& MCMC & 3000 & -- & -- & -- & 2911.2 \\
\midrule
\multirow{4}{*}{%
\begin{tabular}{@{}c@{}} NHANES \\ Telomere \end{tabular}%
}
& RPS & 2435 & 1.000 & 0.734 & 0.901 & 4.0 \\
& PB  & 2735 & 0.890 & 0.734 & 0.901 & 7172.9 \\
& AIS & 300  & 0.129 & 0.000 & 0.965 & 2914.8 \\
& MCMC & 3000 & -- & -- & -- & 4885.0 \\
\bottomrule
\end{tabular}
\end{table}

\section{Discussion}\label{sec:discussion}
Rashomon‑seeded annealing provides a computationally efficient bridge between deterministic exploration and unguided stochastic sampling. By initializing the annealing process with the Rashomon set, the framework estimates the full posterior distribution with high fidelity while avoiding both the truncation errors inherent in subset‑restricted inference and the prohibitive cost of exhaustive enumeration of the whole model space. The methodology is fundamentally general, and we envision that many other combinatorial model spaces with similar structural topologies, such as trees or directed acyclic graphs, can be accommodated. Implementation in these domains requires a well‑defined distance metric that suitably computes dissimilarities between models and the availability of proposal distributions that permit efficient sampling at each temperature of the annealing trajectory.
While Rashomon sets have traditionally served towards interpretability and predictive robustness, our work highlights a distinct and complementary role, that they can serve as a computational catalyst for estimating the posterior distribution over the whole model space. While our focus here is limited to Bayesian model averaging, investigating how this approach might work for discrete optimization or structure learning is a natural next step for future research. 

\section*{Acknowledgement}
This work used computational resources and storage services of the Hyak Klone cluster provided by the University of Washington and the eScience Institute. The authors were supported by grants from the Office of Naval Research, the U.S. Department of Energy, and the National Institutes of Health. Views expressed in the paper are solely those of the authors.

%%%%%%%%%%%%% BIBLIOGRAPHY %%%%%%%%%%%%%%%%

\bibliographystyle{plainnat}
\bibliography{refs}

%%%%%%%%%%%%% APPENDIX %%%%%%%%%%%%%%%%

\newpage
\appendix

\renewcommand{\thesection}{A\arabic{section}}
\renewcommand{\theequation}{A\arabic{equation}}
\setcounter{equation}{0}
\renewcommand{\thelemma}{A\arabic{lemma}}
\setcounter{lemma}{0}
\renewcommand{\thecorollary}{A\arabic{corollary}}
\setcounter{corollary}{0}
\renewcommand{\theproposition}{A\arabic{proposition}}
\setcounter{proposition}{0}
\renewcommand{\theexample}{A\arabic{example}}
\setcounter{example}{0}
\renewcommand{\theremark}{A\arabic{remark}}
\setcounter{remark}{0}
\renewcommand{\thefigure}{A\arabic{figure}}
\setcounter{figure}{0}

\section{Proof of the theoretical results}\label{sec:consistency}

We prove the almost‑sure consistency of the self‑normalized AIS estimator
stated in Theorem~\ref{thm:general} and its corollaries.  
The notation is exactly that of Section~\ref{sec:annealing}: $Q$ is the joint
distribution of a single model–weight pair $(M,w)$ produced by
Algorithm~\ref{algo:seeded_ais}, and $\mathbb{Q}$ is the product measure on
$(\mathcal{M}\times(0,\infty))^{\mathbb{N}}$ constructed from independent draws
from $Q$.  All expectations $\E_Q$ are taken with respect to $Q$, and
almost‑sure statements refer to $\mathbb{Q}$.

\subsection{Unbiasedness of the AIS weights}
A standard result for annealed importance sampling \citep{neal2001_ais}
guarantees that there exists a constant
\[
C \;=\; \frac{C_T}{C_0} \;>\; 0,
\]
where $C_T = \sum_{M \in \calM} \tilde{p}(M \given \calD)$ and $C_0 = \sum_{M \in \calM} q_0(M; \calS)$ are the normalizing constants of the unnormalized posterior $\tilde{p}(\cdot \given \calD)$ and the unnormalized initial density $q_0(\cdot; \calS)$, respectively. For any bounded measurable function $\zeta:\calM \to \Real^p$,
\begin{equation}\label{eq:app_unbiased}
\E_{Q}\bigl[ w\,\zeta(M) \bigr] = C\;\E_{M\mid\mathcal{D}}\bigl[\zeta(M)\bigr] \;.
\end{equation}
Setting $\zeta$ as the unit function yields $\E_{Q}[w]=C$.

\subsection{Proof of Theorem~\ref{thm:general}}
Let $\zeta:\calM \to \Real^p$ be a measurable function satisfying
$\E_{M \given \calD}[\|\zeta(M)\|]<\infty$, and assume the moment conditions of
the theorem, $\E_{Q}[w]<\infty$ and $\E_{Q}[\,w\,\|\zeta(M)\|\,]<\infty$, hold.
Define the vector‑valued sample averages
\[
A_J = \frac{1}{J}\sum_{j=1}^{J} w_j\,\zeta(M_j),\quad
B_J = \frac{1}{J}\sum_{j=1}^{J} w_j \;.
\]
The pairs $(w_j,\zeta(M_j))$ are i.i.d. under $\mathbb{Q}$ because each AIS
trajectory is independently and identically distributed according to $Q$.
Hence, by Kolmogorov's strong law of large numbers applied component‑wise,
\begin{equation}\label{eq:slln_A}
\begin{split}
A_J &\;\xrightarrow{\text{a.s.}}\; \E_{Q}[w\,\zeta(M)],\\
B_J &\;\xrightarrow{\text{a.s.}}\; \E_{Q}[w] \;=\; C.
\end{split}
\end{equation}
The limits are finite by the assumed moment conditions, and $C>0$ by the
definition of the normalizing constants.  
Using the unbiasedness property~\eqref{eq:app_unbiased} with $\zeta$ itself,
\[
\E_{Q}[w\,\zeta(M)] = C\,\E_{M\given\calD}[\zeta(M)] = C\,\psi,
\]
where $\E_{M \given \calD}[\zeta(M)] = \psi$. Since $C>0$, the continuous
mapping theorem applied to $A_J/B_J$ yields
\[
\hat{\psi}_J = \frac{A_J}{B_J} \xrightarrow{\text{a.s.}} \frac{C\,\psi}{C} = \psi,
\]
where the convergence is with respect to the product measure $\mathbb{Q}$ and
holds in $\Real^p$.

\subsection{Proof of Corollary~\ref{cor:posterior_func}}
For a scalar parameter $\theta$, define
$\zeta_\theta(M)=P(\theta\mid\mathcal{D},M)$, the conditional posterior
distribution function of $\theta$ under model $M$.
Because $0\le\zeta_\theta(M)\le 1$, $\zeta_\theta$ is bounded, and hence
\[
\E_{Q}\bigl[\,w\,\|\zeta_\theta(M)\|\,\bigr] \le \E_{Q}[w] = C < \infty \;.
\]
The moment condition on $w\|\zeta\|$ in Theorem~\ref{thm:general} is therefore
automatically satisfied whenever $\E_{Q}[w]<\infty$.  Applying
Theorem~\ref{thm:general} with $p=1$ and $\zeta=\zeta_\theta$ immediately gives
\[
\hat{P}_J(\theta) =
\frac{\sum_{j=1}^J w_j\,P(\theta\given\calD,M_j)}
     {\sum_{j=1}^J w_j} \xrightarrow{\text{a.s.}} \E_{M \given \calD}[P(\theta \given \calD, M)] = P(\theta \given \calD) \;.
\]

\subsection{Proof of Corollary~\ref{cor:uniform}}
The model space $\calM$ is finite. Hence the posterior over models is a
discrete distribution with probabilities $p(M\mid\mathcal{D})$, and the
marginal posterior distribution function of $\theta$ is the finite mixture
\[
P(\theta\mid\mathcal{D}) = \sum_{M\in\mathcal{M}} P(\theta\mid\mathcal{D},M)\,
p(M\mid\mathcal{D}).
\]

We note that, the AIS estimator for $P(\theta\mid\mathcal{D})$ can be rewritten as
\[
\hat{P}_J(\theta) = \sum_{M\in\calM} P(\theta\given\calD,M)\,
\hat{p}_J(M),
\qquad
\hat{p}_J(M)=\frac{\sum_{j=1}^J w_j\,\mathbf{1}(M_j=M)}{\sum_{j=1}^J w_j}.
\]
Because $0\le P(\theta\mid\mathcal{D},M)\le 1$ for every $\theta$ and $M$,
\[
\sup_{\theta} \bigl| \hat{P}_J(\theta) - P(\theta\mid\mathcal{D}) \bigr|
\;\le\;
\sum_{M\in\mathcal{M}} \bigl| \hat{p}_J(M) - p(M\mid\mathcal{D}) \bigr|
\;=\;
\| \hat{p}_J - p \|_{\mathrm{TV}},
\]
which follows from the triangle inequality.  Here $\|\cdot\|_{\mathrm{TV}}$
denotes the total variation distance.

Now, for each fixed $M$, Theorem~\ref{thm:general} with
$\zeta(M') = \mathbf{1}(M' = M)$, which is bounded, yields
$\hat{p}_J(M) \xrightarrow{\text{a.s.}} p(M\mid\mathcal{D})$ as $J\to\infty$.
Since $\mathcal{M}$ is finite, the total variation distance is the finite sum
of absolute differences, hence
$\|\widehat p_J-p\|_{\mathrm{TV}}\xrightarrow{\text{a.s.}}0$,
and therefore
\[
\sup_{\theta} \bigl| \hat{P}_J(\theta) - P(\theta\mid\mathcal{D}) \bigr|
\;\xrightarrow{\text{a.s.}}\; 0.
\]

The quantile consistency follows from uniform convergence together with
continuity of the limit.  Because $P(\theta\mid\mathcal{D})$ is continuous and
strictly increasing in a neighbourhood of $q_\alpha$, for any $\varepsilon>0$
there exists $\delta>0$ such that
$P(q_\alpha - \varepsilon\mid\mathcal{D}) \le \alpha - \delta$ and
$P(q_\alpha + \varepsilon\mid\mathcal{D}) \ge \alpha + \delta$.
The uniform convergence
$\sup_\theta|\hat{P}_J(\theta) - P(\theta\mid\mathcal{D})|
\xrightarrow{\text{a.s.}}0$ then implies that, for sufficiently large $J$,
$\hat{P}_J(q_\alpha - \varepsilon) < \alpha$ and
$\hat{P}_J(q_\alpha + \varepsilon) \ge \alpha$ almost surely, which forces
$\hat{q}_{\alpha,J} \in (q_\alpha-\varepsilon, q_\alpha+\varepsilon)$.
Hence $\hat{q}_{\alpha,J} \xrightarrow{\text{a.s.}} q_\alpha$.

\subsection{Discussion of the assumptions}
The consistency result relies on the existence of the first moment of $w$ and
of $w\,\|\zeta(M)\|$ under the sampling distribution $Q$.  In practice, the
temperature ladder and the number of Metropolis–Hastings steps $L$ are chosen
so that the importance weights have moderate variability; a small variance of
$\log w$ is a standard diagnostic.
If $\E_{Q}[w]<\infty$, the strong law of large numbers applies without further
moment restrictions, and the existence of $\E_{Q}[w\|\zeta\|]$ is guaranteed
whenever $\zeta$ is bounded, a condition met by all distribution‑function
targets as well as by many other functionals of interest (moments, probabilities
of specific model configurations, etc.).  For unbounded $\zeta$, a finite
moment of $w\|\zeta\|$ can be verified empirically or ensured by truncation
arguments.  The product measure $\mathbb{Q}$ formalises the notion of repeating
the AIS procedure infinitely often, and the almost‑sure convergence means that
for almost every realisation of the infinite sequence of models and weights,
$\hat{\psi}_J$ approaches $\psi$.
In finite samples, the quality of approximation is monitored through the
effective sample size $(\sum_{j=1}^J w_j)^2 / \sum_{j=1}^J w_j^2$ and by
checking the stability of the estimates across independent runs.
Finally, the constant $C$ cancels in the self‑normalised estimator, so the
user never needs to compute the normalising constants $C_T$ and $C_0$; only the
unnormalised posterior and initial density are required to form the importance
weights.  This feature makes the method fully practical for the large discrete
spaces considered in this paper.

\section{Additional Details of AIS}\label{sec:ais_app}

\subsection{Estimation of quantiles from mixture distribution}
To support quantile estimation from the weighted posterior sample, we consider a mixture of $J$ continuous probability distributions with cumulative distribution functions (CDFs) $F_j(\cdot)$, each depending on $\theta$, and
normalized weights $\bar w_j = w_j / \sum_{k} w_k$.  The mixture CDF at a point
$\theta$ is
\[
  \hat P_J(\theta) = \sum_{j=1}^J \bar w_j \, F_j(\theta) \;.
\]
For a target probability $\alpha \in (0,1)$, we seek the mixture quantile
$q_\alpha$ such that $\hat P_J(q_\alpha) = \alpha$.

\begin{lemma}[Quantile Bracketing]\label{lemma:quantile_bracketing}
  Let $q_{\alpha,j}$ be the $\alpha$-th quantile of the $j$-th component, i.e.,
  $F_j(q_{\alpha,j}) = \alpha$.  Then the mixture quantile $q_\alpha$ satisfies
  \[
    \min_{j} q_{\alpha,j} \le q_\alpha \le \max_{j} q_{\alpha,j} \;.
  \]
\end{lemma}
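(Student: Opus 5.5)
The plan is to bracket $q_\alpha$ using only two facts: each $F_j$ is nondecreasing, and the mixture weights $\bar w_j$ are nonnegative and sum to one. Write $q_{\min} = \min_j q_{\alpha,j}$ and $q_{\max} = \max_j q_{\alpha,j}$. I would then evaluate the mixture CDF $\hat P_J$ at these two endpoints and compare the values with $\alpha$.

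For the upper endpoint, fix any $j$. Since $q_{\max} \ge q_{\alpha,j}$ and $F_j$ is nondecreasing, we get $F_j(q_{\max}) \ge F_j(q_{\alpha,j}) = \alpha$. Multiplying by $\bar w_j \ge 0$ and summing over $j$ gives
\[
\hat P_J(q_{\max}) \;=\; \sum_{j=1}^J \bar w_j F_j(q_{\max}) \;\ge\; \alpha \sum_{j=1}^J \bar w_j \;=\; \alpha .
\]
The same argument at the lower endpoint uses $F_j(q_{\min}) \le F_j(q_{\alpha,j}) = \alpha$ for every $j$, which yields $\hat P_J(q_{\min}) \le \alpha$.

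The remaining step is to turn these two inequalities into the bracketing of $q_\alpha$. This is where I expect the only real care is needed, because the meaning of $q_\alpha$ depends on whether $\hat P_J(\cdot) = \alpha$ has a unique root.

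First, $\hat P_J$ is a convex combination of continuous CDFs, so it is continuous and nondecreasing. By the intermediate value theorem, a root of $\hat P_J(\theta) = \alpha$ therefore exists in $[q_{\min}, q_{\max}]$. If we adopt the paper's convention $q_\alpha = \inf\{\theta : \hat P_J(\theta) \ge \alpha\}$, then:
\begin{itemize}
\item $\hat P_J(q_{\max}) \ge \alpha$ places $q_{\max}$ in that set, so $q_\alpha \le q_{\max}$;
\item $q_\alpha \ge q_{\min}$ needs an extra step, given below.
\end{itemize}

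For the lower bound, take any $\theta < q_{\min}$. If every component is strictly increasing at its own quantile (for instance, Gaussian components), then $F_j(\theta) < \alpha$ for all $j$, hence $\hat P_J(\theta) < \alpha$, and so $q_\alpha \ge q_{\min}$. Without strict increase, choose each $q_{\alpha,j}$ as the generalized (infimum) quantile of $F_j$. Then $F_j(\theta) < \alpha$ for all $\theta < q_{\alpha,j}$ holds by definition, and the same conclusion follows.

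Finally, if the root of $\hat P_J(\theta) = \alpha$ is not unique, every root lies in $[q_{\min}, q_{\max}]$ by the same reasoning. The bracket $[q_{\min}, q_{\max}]$ is therefore a valid initial interval for bisection in the mixture quantile solver.
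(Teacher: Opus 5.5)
Your main argument is correct, and it reaches the bracket by a slightly different route from the paper. The paper assumes the components are strictly increasing near their quantiles and rules out roots on both sides. It shows $\hat P_J(\theta)<\alpha$ for $\theta<\min_j q_{\alpha,j}$ and, by symmetry, $\hat P_J(\theta)>\alpha$ for $\theta>\max_j q_{\alpha,j}$, then invokes the intermediate value theorem.

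Your lower bound is that same exclusion argument. Your upper bound, however, uses only the non-strict inequality $\hat P_J(q_{\max})\ge\alpha$ together with the definition $q_\alpha=\inf\{\theta:\hat P_J(\theta)\ge\alpha\}$. This is the convention the paper itself uses for $\hat q_{\alpha,J}$ in Corollary~\ref{cor:uniform}. On the lower side you can also drop strictness by taking the $q_{\alpha,j}$ to be infimum quantiles.

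Your version therefore holds for arbitrary continuous components and makes the quantile unambiguous. The paper's two-sided version gives, under strict monotonicity, that no root of $\hat P_J(\theta)=\alpha$ lies outside the bracket, which is the form relevant to root-finding. For the Gaussian components used in the paper the two coincide, because $\hat P_J$ is then strictly increasing and the root is unique.

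Your closing sentence is not justified in the generality you state it. The lower-side reasoning rules out roots below $q_{\min}$. Ruling out roots above $q_{\max}$ requires $F_j(\theta)>\alpha$ for every $\theta>q_{\alpha,j}$. That holds under your first hypothesis (strict increase at each component quantile), but not under the infimum-quantile alternative you offer for the non-strict case.

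For a counterexample, let all components share a continuous CDF that is below $\alpha$ on $(-\infty,0)$ and equal to $\alpha$ on $[0,1]$. The infimum quantiles give $q_{\min}=q_{\max}=0$, yet every $\theta\in[0,1]$ solves $\hat P_J(\theta)=\alpha$.

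You can fix this in either of two ways:
\begin{enumerate}
\item Drop that sentence. A bracketing root-finder only needs the sign information $\hat P_J(q_{\min})\le\alpha\le\hat P_J(q_{\max})$, which you have already established.
\item Take the right endpoint from the upper quantiles $\sup\{\theta:F_j(\theta)\le\alpha\}$. With that choice the symmetric exclusion argument does go through.
\end{enumerate}
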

\begin{proof}
  Each $F_j$ is non‑decreasing.  If $\theta < \min_j q_{\alpha,j}$, then
  $F_j(\theta) \le F_j(q_{\alpha,j}) = \alpha$ for all $j$, with strict
  inequality for at least one $j$ provided the CDFs are strictly increasing near
  the quantile.  Hence $\hat P_J(\theta) < \alpha$.  By symmetry, if
  $\theta > \max_j q_{\alpha,j}$ then $\hat P_J(\theta) > \alpha$.  Because
  $\hat P_J$ is continuous, the intermediate value theorem guarantees that
  $q_\alpha$ lies within the stated bounds.
\end{proof}

In practice, we set the search interval $[a,b]$ as
\[
  a = \min_j q_{\alpha,j} - \upsilon,\quad
  b = \max_j q_{\alpha,j} + \upsilon \;,
\]
with a small numerical margin $\upsilon$.  This ensures that the objective
function $f(\theta) = \hat P_J(\theta) - \alpha$ changes sign on $[a,b]$,
permitting robust root‑finding.  When the component CDFs belong to a
location‑scale family (e.g., Gaussian in our factorial design
example), we compute component quantiles via standard inverse‑CDF
routines and solve $f(\theta)=0$ using Brent's method. We implement this routine by utilizing the \texttt{ndtr} and \texttt{ndtri} functions from the \texttt{scipy.special} library for efficient evaluation of the Gaussian distribution function, alongside the \texttt{brentq} root-finder from the \texttt{scipy.optimize} package to achieve robust numerical convergence.

\subsection{Adaptive selection of the temperature ladder}\label{sec:adaptive_ladder}

The efficiency of AIS hinges on the temperature schedule $\{b_t\}_{t=1}^T$.
Excessively large steps produce weight degeneracy, while overly small steps
waste computation.  We adapt the ladder by monitoring the effective sample size
(ESS) of the importance weights that a candidate step would generate, using a
small set of pilot particles \citep{cameron_2019}.

Let $g(M) = \log \tilde{p}(M \given \calD) - \log p_0(M; \calS)$,
where $\tilde{p}(M \given \calD)$ is the unnormalized posterior and
$p_0(M; \calS)$ the seed‑based initial distribution. For a proposed temperature increment $\Delta b$ from $b_{t-1}$ to $b_t = b_{t-1} + \Delta b$,
the change in log‑weight for a model $M$ distributed approximately according to
$p_{t-1}$ is $\Delta b \, g(M)$.  Given a pilot sample of $K$ models
$\{M^{(k)}\}_{k=1}^K$ that are representative of $p_{t-1}$ (initially drawn
from $p_0$, and after each accepted rung refreshed by a few Metropolis–Hastings
steps targeting the new $p_t$), we compute the normalized incremental weights
\[
  \bar w_k(\Delta b) =
  \frac{\exp\{\Delta b \, g(M^{(k)})\}}
       {\sum_{\ell=1}^K \exp\{\Delta b \, g(M^{(\ell)})\}},
  \quad k = 1,\dots,K.
\]
The step‑wise ESS is $\mathrm{ESS}(\Delta b) = 1 / \sum_{k=1}^K \bar w_k(\Delta b)^2$,
and we monitor the ratio $\mathrm{ESS}(\Delta b)/K$.

Starting from $b_0 = 0$, we propose an initial $\Delta b$ (e.g., $0.5$).  If
$\mathrm{ESS}(\Delta b)/K$ falls below a pre‑specified threshold $\tau$
(typically $0.7$–$0.9$), we halve $\Delta b$ and re‑evaluate until the
criterion is met.  The first accepted rung is $b_1 = \Delta b$.  We then
optionally resample the pilot particles according to $\bar w_k(\Delta b)$ and
apply a small number of Metropolis–Hastings moves targeting $p_1$ to keep them
representative of the new intermediate distribution.  The process is repeated
from $b_1$ to determine the next increment, and continues until $b_T = 1$ is
reached.  The resulting ladder is dense where $\log \tilde p - \log p_0$ varies
sharply (preventing weight degeneracy) and sparse where the two distributions
are already close (saving computation).  After the ladder is fixed, the full AIS run with $J$ independent chains is performed using this schedule, yielding the
weighted sample $\{(M_j,w_j)\}_{j=1}^J$ for final posterior estimation.

\begin{remark}
  The AIS output is a properly weighted sample from the full posterior
  $p(M \given \mathcal{D})$; no model space truncation is imposed.  Any posterior
  summary—moments, quantiles, or marginal densities—can be estimated without
  restricting inference to the seed set.  The theoretical guarantees of
  Section~\ref{sec:annealing} hold for any temperature ladder that preserves
  the unbiasedness of the weights; the adaptive procedure merely improves
  finite‑sample efficiency.  Residual Monte Carlo error can be quantified
  through standard diagnostics such as the effective sample size, as illustrated
  in the simulations (Section~\ref{sec:simulation}).
\end{remark}

\section{Geometric foundations of interpretable factorial designs}\label{sec:geometry_app}

The central task for estimating treatment effect heterogeneity in factorial designs as given by \eqref{eq:hier}, is to partition the feature space by pooling cell means $\beta_k$ that are homogeneous, i.e., sharing an identical expected outcome. To maintain scientific coherence, we do not consider arbitrary groupings of cells. Instead, we aggregate only contiguous feature combinations where pooling signifies that the outcome is invariant to incremental shifts in feature levels.

We formalize the structure of our design by focusing on features where the levels possess a natural progression. For such a feature $m$,  the levels are \emph{ordered}, i.e., $0 \prec 1 \prec \dots \prec R-1$. This linear order represents the intuitive progression of ``higher'' intensity for a single intervention. We then equip the entire feature space $\mathcal{K}$ with a \emph{partial order} derived from these level-wise rankings. To be specific, for any two feature combinations $k, k' \in \calK$, we say $k \preceq k'$ if and only if either $k_m \prec k'_m$ or $k_m = k'_{m}$ for every feature $m \in \{1, \dots, M\}$. For features that lack a natural ordering, such as categorical demographic or geographic variables (e.g., country, race), we do not impose a level-wise ranking. Instead, these variables are typically used to define \emph{profiles} (detailed in Section~\ref{sec:profiles}), effectively creating distinct regimes for different levels of the categorical factor. This ensures that our ordering only governs comparisons where ``incremental change'' is well-defined.
\begin{example}\label{ex:1}
Consider a setting where a policymaker evaluates a policy bundle consisting of three distinct interventions, each taking a level of 0 or 1 with $0 \prec 1$. Here, each specific policy represents a feature combination, where the individual interventions correspond to the features. Then, the resulting ordering of the feature space is ``partial'' because it distinguishes between comparable increments and incomparable trade-offs. For example, if the third intervention is fixed at level $x$, the transition from policy $(0, 0, x)$ to $(1, 0, x)$ represents a clear ``increment'' in the first intervention. Because the levels of all other features are held constant or increased, we can say $(0, 0, x) \preceq (1, 0, x)$. On the other hand, the policies $(1, 0, x)$ and $(0, 1, x)$ are incomparable, since one intervention increases while the other decreases, neither policy is strictly ``greater'' than the other.
\end{example}
This structure is well represented by a Hasse diagram \citep{banerjeeTVA}, where the structural skeleton, comprising the nodes and directed edges, corresponds to feature combinations and their ordering, respectively (see Figure~\ref{fig:hasse_main}). 
This allows us to visually distinguish comparable increments from incomparable pairs and restrict pooling to well-defined paths of incremental change. 

\subsection{Profile}\label{sec:profiles}
To provide a principled framework for constructing the model space, we introduce the concept of a profile. Profiles allow the researcher to encode specific scientifically meaningful regimes -- contexts in which different subsets of features are expected to operate.
\begin{definition}[Profile]\label{def:profile}
A \emph{profile}, $\rho \in \{0, 1\}^M$, is a binary vector indicating, for each of the $M$ features, whether that feature is ``active'' (1) or ``inactive'' (0) in the Hasse diagram.
\end{definition}
For any feature combination $k \in \mathcal{K}$, let $\profile(k)$ denote the mapping that returns the profile associated with $k$. Given a target profile $\rho$, we define the subset of the feature space $\mathcal{K}_\rho \subseteq \mathcal{K}$ consisting of all feature combinations $k$ that are consistent with the fixed levels of the inactive features in $\rho$, i.e., $\calK_\rho = \{k \in \calK : \profile(k) = \rho\}$. Here, $\mathcal{K}_\rho$ essentially corresponds to a conditional Hasse diagram, i.e., a specific ``slice'' of the full factorial design. For example, each Hasse diagram in Figure~\ref{fig:hasse_main} corresponds to the profile $\rho = (1, 1, 0)$ since the third feature is fixed at $x$ and only the first and second features change values. By defining profiles a priori, we anchor the analysis to the researcher's design-based representation. This prevents the accidental aggregation of non-comparable treatment configurations and ensures that the resulting partitions reflect substantive scientific contrasts rather than statistical coincidences.

\subsection{Pools and partitions}
While the Hasse diagram provides the structural scaffold of the design, the statistical model is defined by how we group feature combinations into regions of homogeneous outcome. Consider a scenario in setting of Example~1, where transitioning the first intervention from level 0 to 1 while the third intervention remains fixed at $x$, moving from $(0, 0, x)$ to $(1, 0, x)$, yields a negligible shift in the expected outcome. In such instances, the two policies are effectively functionally equivalent, and we can collapse them into a single unit of inference. By merging these redundant feature combinations, we condense the design space and eliminate granularity that do not contribute to the outcome.
Hence, given a profile $\rho$, we define the following.
\begin{definition}[Pool]
    A pool $\pi$ is a set of feature combinations having identical expected outcomes, i.e., for any $k^{(1)}, k^{(2)} \in \pi$, $\beta_{k^{(1)}} = \beta_{k^{(2)}}$.
\end{definition}
\begin{definition}[Partition]\label{def:partition}
A partition $\Pi = \{\pi_1, \pi_2, \dots, \pi_H\}$ is a collection of disjoint pools that together span the profile, such that $\bigcup_{h=1}^H \pi_h = \calK_\rho$.
\end{definition}
To be explicit, the hierarchy between profiles, partitions, and pools is as follows: a profile fixes which features are ``active'' or are being considered; within a profile, a partition divides the active feature combinations, and; each element of a partition is a pool with constant expected outcome.
We illustrate pooling in Figure~\ref{fig:hasse_main} by assigning identical colors to feature combinations within the same pool. In Figure~\ref{fig:hasse1}, for instance, the aggregation of $(0, 0, x)$ with $(1, 0, x)$ indicates that these policies share a common cell mean; similarly, $(0, 1, x)$ and $(1, 1, x)$ are pooled together. On the other hand, Figure~\ref{fig:hasse3} represents the ``saturated'' case under the profile corresponding to the third feature being fixed at $x$; in this instance, every feature combination constitutes its own pool, signifying that each distinct configuration yields unique expected outcome.

While identifying optimal partitions is our focus, discovering profiles, i.e., determining active versus fixed features, is an NP-hard problem beyond our scope. Without specific scientific hypotheses, this entails learning a unique Hasse diagram for every combination of feature levels. We therefore restrict our analysis to partitions within a single profile $\calK_\rho$. Although one can pool across profiles \citep{apara2025}, we focus on partitions reflecting the researcher's \emph{a priori} design choices.

\subsection{Permissible partitions}
Not all partitions are scientifically meaningful. While any grouping of feature combinations is technically a partition, we limit our model space to those that are interpretable and substantively coherent.
We define permissibility through the geometry of the Hasse diagram. The key idea is that we can define these restrictions in terms of the edges denoting the paths of incremental change, rather than treating nodes as isolated points. Let $\calP^\ast$ denote the set of all permissible partitions.
\begin{definition}[Permissible Partition]\label{def:permissibility}
A partition $\Pi$ is \emph{permissible} if and only if it satisfies the following conditions:
\begin{enumerate}
  \item \textbf{Homogeneity:} Every $\pi \in \Pi$ is a pool where all $k \in \pi$ share an identical expected outcome.
  \item \textbf{Contiguity:} Every $\pi$ is a \emph{closed interval} in the partial order. This means each pool forms a contiguous block with no ``holes'' along any monotone path.
  \item \textbf{Parallel Splits:} The partition respects parallel symmetry. Formally, for any distinct pools $\pi_i, \pi_j$, if their minima (or maxima) are incomparable, there must exist another pool whose corner sits at their coordinate-wise maximum (or minimum).
\end{enumerate}
\end{definition}
Operationally, a partition is permissible if and only if it can be obtained by cutting a set of parallel edge-families that run all the way through the Hasse diagram. 
\begin{example}
    Figure~\ref{fig:hasse1} satisfies all permissibility conditions. It represents a clean parallel split where the pools are defined by the first intervention. The minima of the pools are comparable, so Condition~3 is trivially satisfied. Figure~\ref{fig:hasse3} is the saturated case; since every pool is a singleton, it trivially satisfies Condition~2 (convexity) and Condition~3 (parallelism), as every incomparable pair of nodes has a corresponding singleton pool at their coordinate-wise maximum. On the other hand, the configuration in Figure~\ref{fig:hasse2} violates Condition~3 (parallelism). Here, the pool containing $(0, 1, x)$ and $(1, 1, x)$ and the singleton pool $(1, 0, x)$ have incomparable minima, thus violating Condition~3. In this case, we pool $(0, 1, x)$ and $(1, 1, x)$, implying the first intervention has zero effect when the second feature is at level 1. However, we do not pool $(0, 0, x)$ and $(1, 0, x)$, implying the same intervention does have an effect when the second feature is at level 0. This partial split implies a measure-zero interaction where treatment effects perfectly offset only at specific levels, which we exclude for lack of robustness.
\end{example}

\section{Derivations of posterior distributions}\label{sec:appendix_posterior}

Given the Bayesian hierarchical model given in \eqref{eq:hier}, we derive the posterior distributions for the pool means $\gamma$ and the cell means $\beta$ given a fixed value $\hat{\sigma}^2$ for $\sigma^2$. The derivation follows from the conjugacy of the $g$-prior. For a fixed partition $\Pi = \{\pi_1, \ldots, \pi_H\} \in \calP^\ast$, the linear model is
\begin{equation}\label{eq:model_app}
  y = \tilde{X} \beta + \epsilon, \quad \epsilon \sim \Normal(0, \sigma^2 I_n),
\end{equation}
where $\tilde{X} \in \Real^{n\times K}$ is the design matrix, $\beta \in \Real^K$ is the vector of cell means, and the constraint that all cells in the same pool share the same mean is expressed as $\beta = \Lambda_\Pi \gamma$, with $\gamma \in \Real^H$ the vector of pool means and $\Lambda_\Pi \in \{0,1\}^{K\times H}$ the binary partition matrix, as defined in Section~\ref{sec:bayes_model}.
Define the reduced design matrix $\tilde{X}_\Pi \coloneqq \tilde{X}\Lambda_\Pi$ so that $D\beta = \tilde{X}_\Pi\gamma$.  We assume $n > H$. 

\begin{lemma}\label{lemma1}
    Each row of $\tilde{X}_\Pi$ contains exactly one 1.
\end{lemma}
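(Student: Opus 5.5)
The plan is to compute the entries of $\tilde{X}_\Pi = \tilde{X}\Lambda_\Pi$ directly and show that each row has one nonzero entry, equal to $1$. Both factors are binary matrices with exactly one $1$ per row, so the product should inherit that property.

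First, I would record the two row facts that the construction in Section~\ref{sec:bayes_model} gives us.
\begin{itemize}
\item For $\tilde{X}$: observation $i$ belongs to exactly one cell $k(i)\in\calK$. Hence $\tilde{X}_{ik} = \mathbf{1}(k = k(i))$, and row $i$ is the standard basis vector $e_{k(i)}^\T$.
\item For $\Lambda_\Pi$: since $\Pi$ is a partition (Definition~\ref{def:partition}), its pools are disjoint and cover $\calK$. So each cell $k$ lies in exactly one pool $\pi_{h(k)}$, and $(\Lambda_\Pi)_{kh} = \mathbf{1}(h = h(k))$.
\end{itemize}

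Second, I would expand the $(i,h)$ entry of the product:
\[
(\tilde{X}_\Pi)_{ih} = \sum_{k\in\calK} \tilde{X}_{ik}(\Lambda_\Pi)_{kh} = (\Lambda_\Pi)_{k(i)h} = \mathbf{1}\bigl(h = h(k(i))\bigr).
\]
The sum collapses because $\tilde{X}_{ik}$ is nonzero only at $k = k(i)$.

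Third, I would read off the conclusion from this formula. Row $i$ of $\tilde{X}_\Pi$ equals $e_{h(k(i))}^\T$, so it contains exactly one entry equal to $1$, in the column of the pool containing observation $i$'s cell, and zeros elsewhere.

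There is no real obstacle here. The only point needing care is that "exactly one" rests on both halves of the partition property: disjointness of the pools gives at most one $1$ per row, and coverage of $\calK$ gives at least one. If the analysis is restricted to a profile, coverage is of $\calK_\rho$, so I would note that the observations are taken within that profile, so that $k(i)\in\calK_\rho$.
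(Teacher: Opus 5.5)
Your proof is correct and follows the paper's argument: both collapse $(\tilde{X}_\Pi)_{ih} = \sum_k \tilde{X}_{ik}(\Lambda_\Pi)_{kh}$ to $(\Lambda_\Pi)_{k(i),h}$ and then use that cell $k(i)$ lies in exactly one pool. You additionally separate disjointness (at most one $1$) from coverage (at least one $1$) and require the observations to lie in $\calK_\rho$; the paper leaves both points implicit.
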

\begin{proof}
For row $i$, let $k(i)$ be the unique column with $\tilde{X}_{i,k(i)} = 1$. Then
\[
(\tilde{X}_\Pi)_{ih} = \sum_{k=1}^K \tilde{X}_{ik} (\Lambda_\Pi)_{kh} = (\Lambda_\Pi)_{k(i),h}.
\]
Now, row $k(i)$ of $\Lambda_\Pi$ has exactly one $1$, since there exists only one pool, say $\pi_h$, such that $k(i) \in \pi_h$. Thus, we have $(\Lambda_\Pi)_{k(i), h}=1$, and all other entries are $0$.  Hence $(\tilde{X}_\Pi)_{ih}=1$ for that $h$ and $0$ otherwise.
\end{proof}

\begin{proposition}\label{prop1}
    $\tilde{X}_\Pi^{\T} \tilde{X}_\Pi = \operatorname{diag}(n_1,\ldots,n_H)$, where $n_h = \bigl|\{i : k_i \in \pi_h\}\bigr|$.
\end{proposition}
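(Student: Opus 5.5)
The plan is to compute the Gram matrix entrywise, using Lemma~\ref{lemma1}. For $h,h' \in \{1,\ldots,H\}$ we have
\[
(\tilde{X}_\Pi^{\T}\tilde{X}_\Pi)_{hh'} = \sum_{i=1}^n (\tilde{X}_\Pi)_{ih}(\tilde{X}_\Pi)_{ih'} .
\]
By the proof of Lemma~\ref{lemma1}, $(\tilde{X}_\Pi)_{ih} = (\Lambda_\Pi)_{k(i),h} = \mathbf{1}(k(i)\in\pi_h)$. So row $i$ of $\tilde{X}_\Pi$ is the indicator vector of the unique pool containing the cell of observation $i$.

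\textbf{Off-diagonal entries.} Take $h\neq h'$. Each summand $\mathbf{1}(k(i)\in\pi_h)\,\mathbf{1}(k(i)\in\pi_{h'})$ vanishes. The reason is that row $i$ has exactly one nonzero entry by Lemma~\ref{lemma1}; equivalently, the pools of $\Pi$ are disjoint, so $k(i)$ cannot lie in both $\pi_h$ and $\pi_{h'}$. Hence every off-diagonal entry is $0$.

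\textbf{Diagonal entries.} Take $h=h'$. Since the entries are binary, $\mathbf{1}(\cdot)^2=\mathbf{1}(\cdot)$, and therefore
\[
(\tilde{X}_\Pi^{\T}\tilde{X}_\Pi)_{hh}=\sum_{i=1}^n \mathbf{1}(k(i)\in\pi_h)=\bigl|\{i : k(i)\in\pi_h\}\bigr| = n_h .
\]
Together with the off-diagonal case, this gives $\tilde{X}_\Pi^{\T}\tilde{X}_\Pi=\operatorname{diag}(n_1,\ldots,n_H)$.

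There is no real obstacle here; the identity is purely combinatorial. The only point to address is that the main text uses $(\tilde{X}_\Pi^{\T}\tilde{X}_\Pi)^{-1}$ in the $g$-prior and asserts $n_h>0$. That positivity does not follow from the computation; it is an assumption that every pool contains at least one observation. I would state it separately as a remark on invertibility rather than as part of the proposition.
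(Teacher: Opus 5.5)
Your proof is correct and follows the paper's argument: expand the $(h,h')$ entry of the Gram matrix, use Lemma~\ref{lemma1} (exactly one $1$ per row) to make the off-diagonal entries vanish, and count the diagonal entry as $n_h$. Your side remark is also right: $n_h>0$ is a separate assumption needed only for invertibility in the $g$-prior, not a consequence of this identity.
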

\begin{proof}
Expanding the $h,j$-th entry of $\tilde{X}_\Pi^\T \tilde{X}_\Pi$, we have
  \[
    (\tilde{X}_\Pi^{\T} \tilde{X}_\Pi)_{hj} = \sum_{i=1}^n (\tilde{X}_\Pi)_{ih}\,(\tilde{X}_\Pi)_{ij}.
  \]
  By Lemma~\ref{lemma1}, for a given $i$ the product $(\tilde{X}_\Pi)_{ih}(\tilde{X}_\Pi)_{ij}$ is non‑zero only if $h=j$ and observation $i$ belongs to pool $h$.  Consequently, we have
  \[
    (\tilde{X}_\Pi^{\T} \tilde{X}_\Pi)_{hj} = 0 \quad \text{if} \ h \neq j,
    \quad
    (\tilde{X}_\Pi^{\T} \tilde{X}_\Pi)_{hh} = \sum_{i=1}^n (\tilde{X}_\Pi)_{ih} = n_h,
  \]
  which is precisely the number of observations whose feature combination lies in pool $\pi_h$.
\end{proof}

\begin{proposition}\label{prop2}
  $\tilde{X}_\Pi^{\T} y = (S_1,\dots,S_H)^{\T}$, where $S_h = \sum_{i:k(i)\in\pi_h} y_i$.
\end{proposition}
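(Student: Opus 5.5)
The plan is to compute the $h$-th entry of $\tilde{X}_\Pi^{\T} y$ directly, reusing the row structure of $\tilde{X}_\Pi$ from Lemma~\ref{lemma1}, exactly as in the proof of Proposition~\ref{prop1}. First I write
\[
(\tilde{X}_\Pi^{\T} y)_h = \sum_{i=1}^n (\tilde{X}_\Pi)_{ih}\, y_i ,
\]
which holds by the definition of matrix–vector multiplication.

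By the computation in the proof of Lemma~\ref{lemma1}, we have $(\tilde{X}_\Pi)_{ih} = (\Lambda_\Pi)_{k(i),h}$. Here $(\Lambda_\Pi)_{k(i),h}$ equals $1$ if $k(i)\in\pi_h$ and $0$ otherwise, by the definition of $\Lambda_\Pi$ in Section~\ref{sec:bayes_model}. So the indicator $(\tilde{X}_\Pi)_{ih}$ simply selects the observations whose cell lies in pool $\pi_h$. The sum therefore collapses to $\sum_{i:k(i)\in\pi_h} y_i = S_h$.

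Stacking these entries over $h=1,\dots,H$ gives the vector $(S_1,\dots,S_H)^{\T}$. There is no real obstacle here. The only care needed is to note that, because $\Pi$ is a partition, each observation contributes to exactly one $S_h$. This is what makes the pool sums well defined and disjoint, and it is guaranteed by Lemma~\ref{lemma1}. The result then feeds directly into the conjugate update for $\hat{\gamma}_{\Pi,h} = S_h/n_h$ in \eqref{eq:post_beta}, together with Proposition~\ref{prop1}.
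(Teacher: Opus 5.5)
Your proof is correct and is essentially the paper's argument: expand $(\tilde{X}_\Pi^{\T} y)_h=\sum_{i=1}^n(\tilde{X}_\Pi)_{ih}\,y_i$, then use Lemma~\ref{lemma1} to see that $(\tilde{X}_\Pi)_{ih}$ is the indicator of $k(i)\in\pi_h$, so the sum collapses to $S_h$. You state the identity $(\tilde{X}_\Pi)_{ih}=(\Lambda_\Pi)_{k(i),h}$ explicitly, whereas the paper's one-line proof uses it only implicitly.
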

\begin{proof}
By Lemma~\ref{lemma1}, if $k(i) \in \pi_h$, then 
  \[
    (\tilde{X}_\Pi^{\T} y)_h = \sum_{i=1}^n (\tilde{X}_\Pi)_{ih}\, y_i = \sum_{i: k(i)\in\pi_h} y_i = S_h.
  \]
\end{proof}

\subsection{Conditional posterior distribution of pool means}\label{sec:gamma_app} 

Treating $\sigma^2$ as known, from the hierarchical model~\eqref{eq:hier}, we have the likelihood and the prior as
\begin{equation*}
\begin{split}
    y \given \gamma, \Pi, \sigma^2 &\sim \Normal \left( \tilde{X}_\Pi\gamma, \sigma^2 I_n \right),\\
    \gamma \given \Pi, \sigma^2 &\sim \Normal \left( 0, g\sigma^2\,(\tilde{X}_\Pi^\top \tilde{X}_\Pi)^{-1} \right).
\end{split}
\end{equation*}
Following the conjugacy of the Gaussian prior on a Gaussian likelihood, we have the posterior precision of $\gamma$ as
\[
\Sigma_\gamma^{-1} = \frac{1}{\sigma^2} \tilde{X}_\Pi^\T \tilde{X}_\Pi + \frac{1}{g\sigma^2}\tilde{X}_\Pi^\T \tilde{X}_\Pi
   = \frac{g+1}{g\sigma^2} \tilde{X}_\Pi^\T \tilde{X}_\Pi \;.
\]
Moreover, the posterior mean of $\gamma$ is given by the weighted mean
\[
\hat{\gamma}_{\Pi} = \Sigma_\gamma \left( \frac{1}{\sigma^2} \tilde{X}_\Pi^\T y \right) = \frac{g}{g+1} \hat\gamma \;.
\]
Therefore, we have the conditional posterior distribution of $\gamma$ as
\begin{equation*}
    \gamma \given \Pi, \calD \sim \Normal \left( \frac{g}{g+1} \hat\gamma_{\Pi}, \frac{g}{g+1} \sigma^2 \mathrm{diag} \left(n_1, \ldots, n_{|\Pi|} \right)^{-1} \right) \;,
\end{equation*}
which follows from Proposition~\ref{prop1} above. Equation~\ref{eq:post_beta} follows from the transformation $\beta = \Lambda_\Pi \gamma$.

\subsection{Marginal posterior of partitions}
\label{sec:equivalence_app}

Treating $\sigma^2$ as known, the marginal likelihood $p(\calD \given \Pi, \sigma^2) = \int p(\calD \given \gamma, \Pi) p(\gamma \given \Pi, \sigma^2) \mathrm{d}\gamma$ is obtained by combining the two normal densities.  The exponent of the product is
\[
-\frac{1}{2\sigma^2}\|y-\tilde{X}_\Pi\gamma\|^2 - \frac{1}{2g\sigma^2}\gamma^\top \tilde{X}_\Pi^\top \tilde{X}_\Pi\gamma \;.
\]
Completing the square in $\gamma$ yields
\[
p(\calD \given \Pi, \sigma^2) = (2\pi\sigma^2)^{-n/2} (1+g)^{-|\Pi|/2} \exp \left\{ -\frac{1}{2\sigma^2(1+g)} \left( \|y\|^2 + g\, \mathrm{SSE}_\Pi \right) \right\},
\]
where $\mathrm{SSE}_\Pi = y^\top(I-H_\Pi)y$ with $H_\Pi = \tilde{X}_\Pi(\tilde{X}_\Pi^\top \tilde{X}_\Pi)^{-1}\tilde{X}_\Pi^\top$, i.e., the within‑pool sum of squared errors. Following Propositions~\ref{prop1}~and~\ref{prop2}, this can be further simplified as
\[
\mathrm{SSE}_\Pi = \sum_{h=1}^{H}\sum_{i:k(i)\in\pi_h} (y_i - \hat\gamma_h)^2,\quad
\hat\gamma_h = \frac{1}{n_{h}}\sum_{i:k(i)\in\pi_h} y_i.
\]
Since $\|y\|^2$ does not depend on $\Pi$, we may absorb it into the normalizing constant and write
\[
p(\calD \given \Pi) \propto (1+g)^{-|\Pi|/2} \exp \left\{ -\frac{g}{2\sigma^2(1+g)} \mathrm{SSE}_\Pi \right\} \;.
\]
Subsequently, multiplying by $p(\Pi)\propto\exp(-\lambda|\Pi|)$ yields the unnormalised posterior
\[
p( \Pi \given \calD) \propto \exp \left\{ -\frac{g}{2\sigma^2(1+g)} \mathrm{SSE}_\Pi - \left( \lambda + \frac{1}{2} \log(1+g) \right) |\Pi| \right\} \;.
\]
Now, note that the posterior distribution $p(\Pi \given \calD)$ is maximized when $- \log p(\Pi \given \calD)$, i.e., the expression $\eta_1 \mathrm{SSE}_\Pi + \eta_2 |\Pi|$ is minimized, where
\[
\eta_1 = \frac{g}{2\sigma^2(1+g)}>0,
\quad
\eta_2 = \lambda + \frac12\log(1+g).
\]
Hence, we can write the following
\[
\arg \max_{\Pi \in \calP^\ast} p(\Pi \given \calD) = \arg \min_{\Pi \in \calP^\ast} \left( \eta_1 \mathrm{SSE}_{\Pi} + \eta_2 |\Pi| \right) = \arg \max_{\Pi \in \calP^\ast} \left( \frac{1}{n} \mathrm{SSE}_{\Pi} + \tilde{\lambda} |\Pi| \right) \;,
\]
for some $\tilde{\lambda} > 0$, where the last equality follows from the fact that the minimization problem is invariant with respect to a scaling factor.

\section{Additional simulation results}\label{sec:appendix_simulation}

Figure~\ref{fig:accuracy_scenario1} shows how well each method recovers the exact posterior in Scenario 1. Note that, the modest size of the model space allows us to explicitly find all model posteriors and evaluate the exact posterior of the cell means directly. While the truncated RPS baseline shows a clear loss in accuracy, our proposed annealing algorithm successfully bridges this gap, reaching error levels comparable to the MCMC reference. This recovery is further evidenced by the interval overlap results in Figure~\ref{fig:IoU_scenario1}A; Rashomon-seeded annealing maintains high IoU values across various seed set sizes, significantly outperforming both the restricted RPS and the PB estimator for estimation of the posterior mean. Figure~\ref{fig:IoU_scenario1}B investigates computational burden, highlighting the efficiency of our approach. Rashomon-seeded annealing is substantially faster than both the PB method and traditional MCMC, demonstrating that using a Rashomon set as a ``warm start'' allows for fairly accurate estimation of full-posterior recovery in a fraction of the time required by MCMC.

\begin{figure}[htbp]
    \centering
    \includegraphics[width=0.95\linewidth]{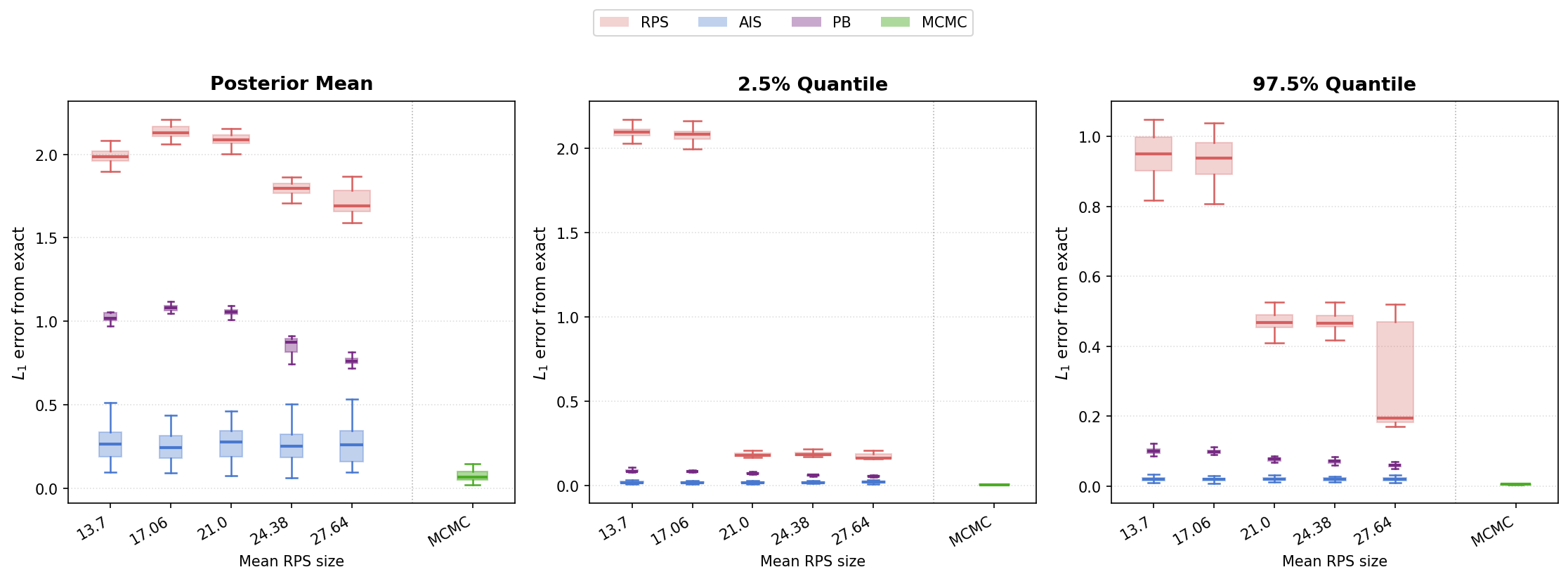}
    \caption{Comparison between the errors from MCMC, RPS, AIS with different $\epsilon$, as compared to the exact posterior}
    \label{fig:accuracy_scenario1}
\end{figure}

\begin{figure}[htbp]
    \centering
    \includegraphics[width=0.75\linewidth]{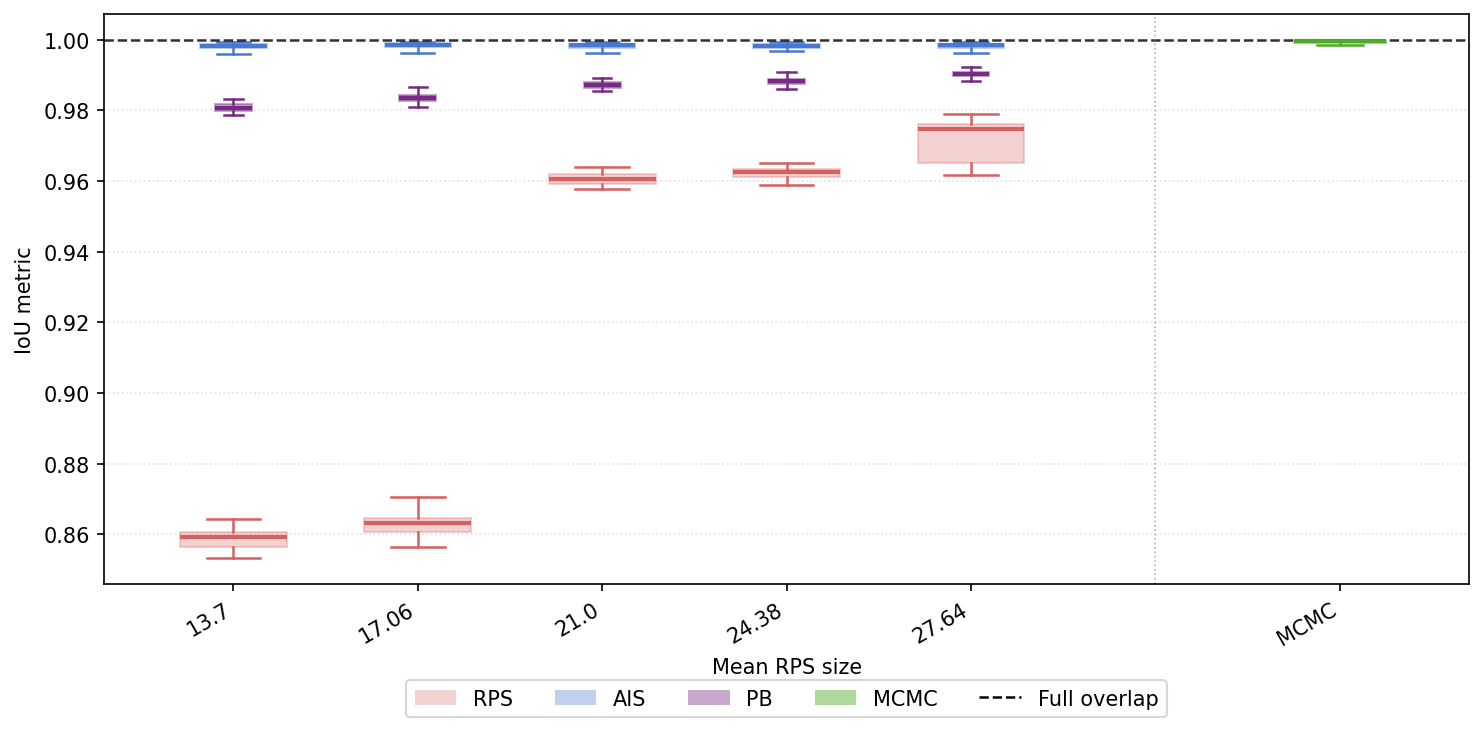}
    \caption{\textbf{A}:Comparison between the IoU metric for overlap with intervals produced by MCMC, for RPS, AIS, PB with different RPS sizes; \textbf{B}:running time for AIS and PB starting with RPS of different sizes, and MCMC}
    \label{fig:IoU_scenario1}
\end{figure}

\section{Additional results from analysis of real data}\label{sec:appendix_real_data}

\begin{figure}[t]
    \centering
    \includegraphics[width=\textwidth]{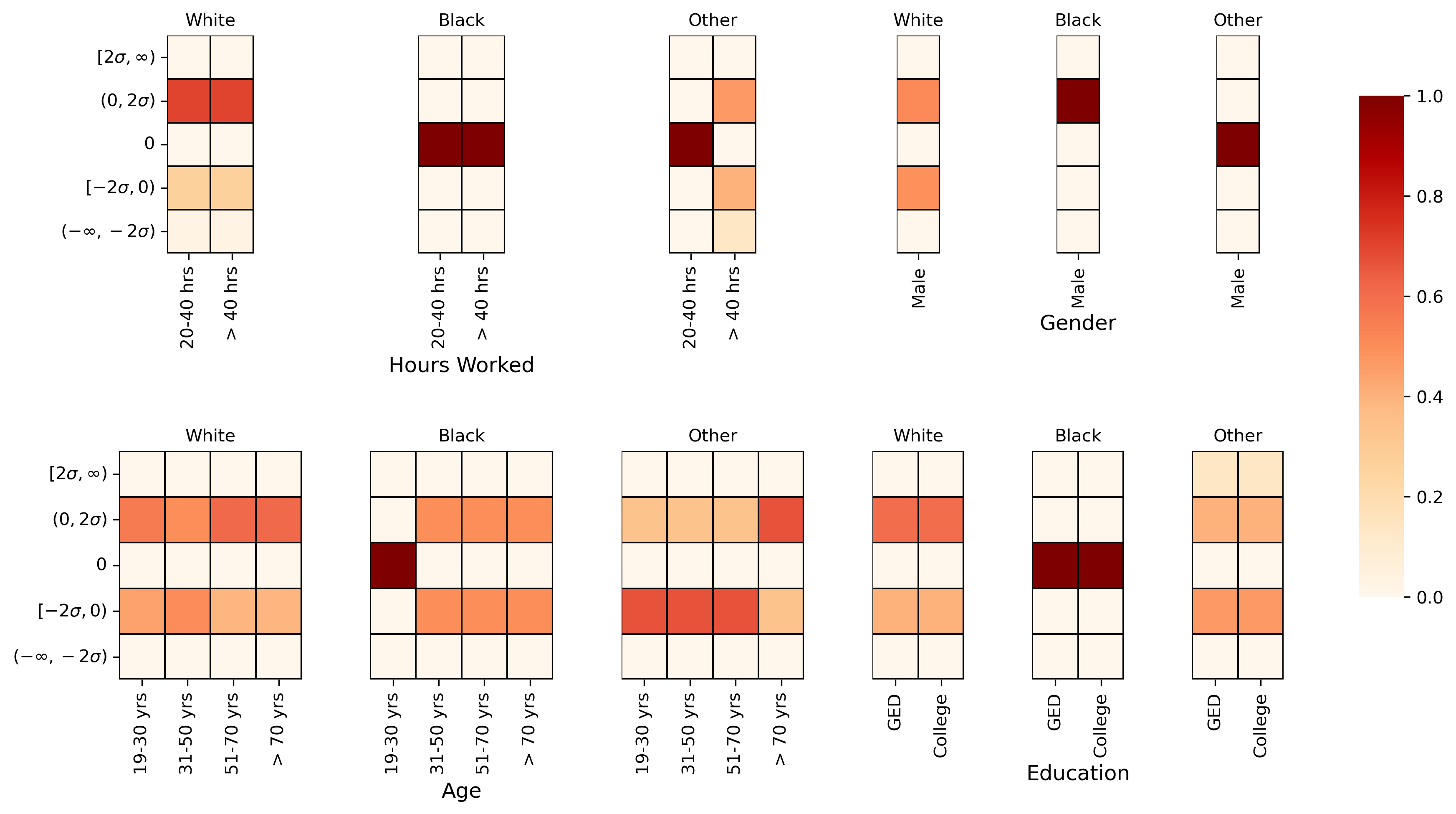}
    \caption{Heterogeneity in telomere length across four features (hours worked,
    gender, age, and education) within the RPS, stratified by race (White, Black,
    Other). Each column corresponds to a feature level, each row
    to one of the five effect-size intervals, and each cell displays
    $c(t_{r,\mathbf{x}}, I)$ for race $r$ and features $\mathbf{x}$. A single
    dark red cell indicates a robust, homogeneous conclusion; multiple lit cells
    indicate heterogeneity or fragility across the RPS.}
    \label{fig:nhanes_heatmap}
\end{figure}

\begin{figure}[t]
    \centering
    \includegraphics[width=\textwidth]{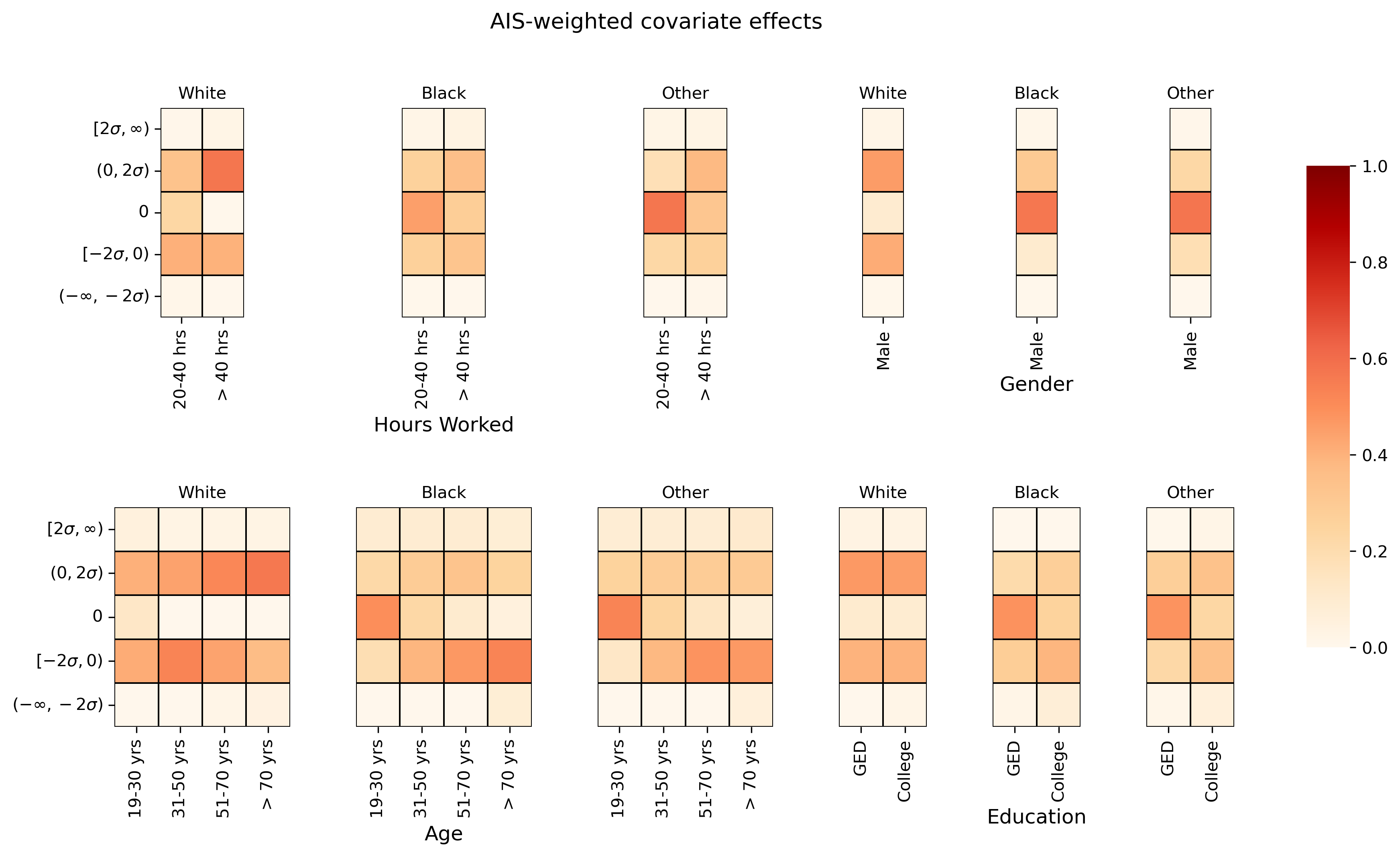}
    \caption{Heterogeneity in telomere length across four features (hours worked,
    gender, age, and education) within the AIS, stratified by race (White, Black,
    Other). Each column corresponds to a feature level, each row
    to one of the five effect-size intervals, and each cell displays
    $c(t_{r,\mathbf{x}}, I)$ for race $r$ and features $\mathbf{x}$. A single
    dark red cell indicates a robust, homogeneous conclusion; multiple lit cells
    indicate heterogeneity or fragility across the RPS.}
    \label{fig:nhanes_ais_heatmap}
\end{figure}

\begin{figure}[t]
    \centering
    \includegraphics[width=\textwidth]{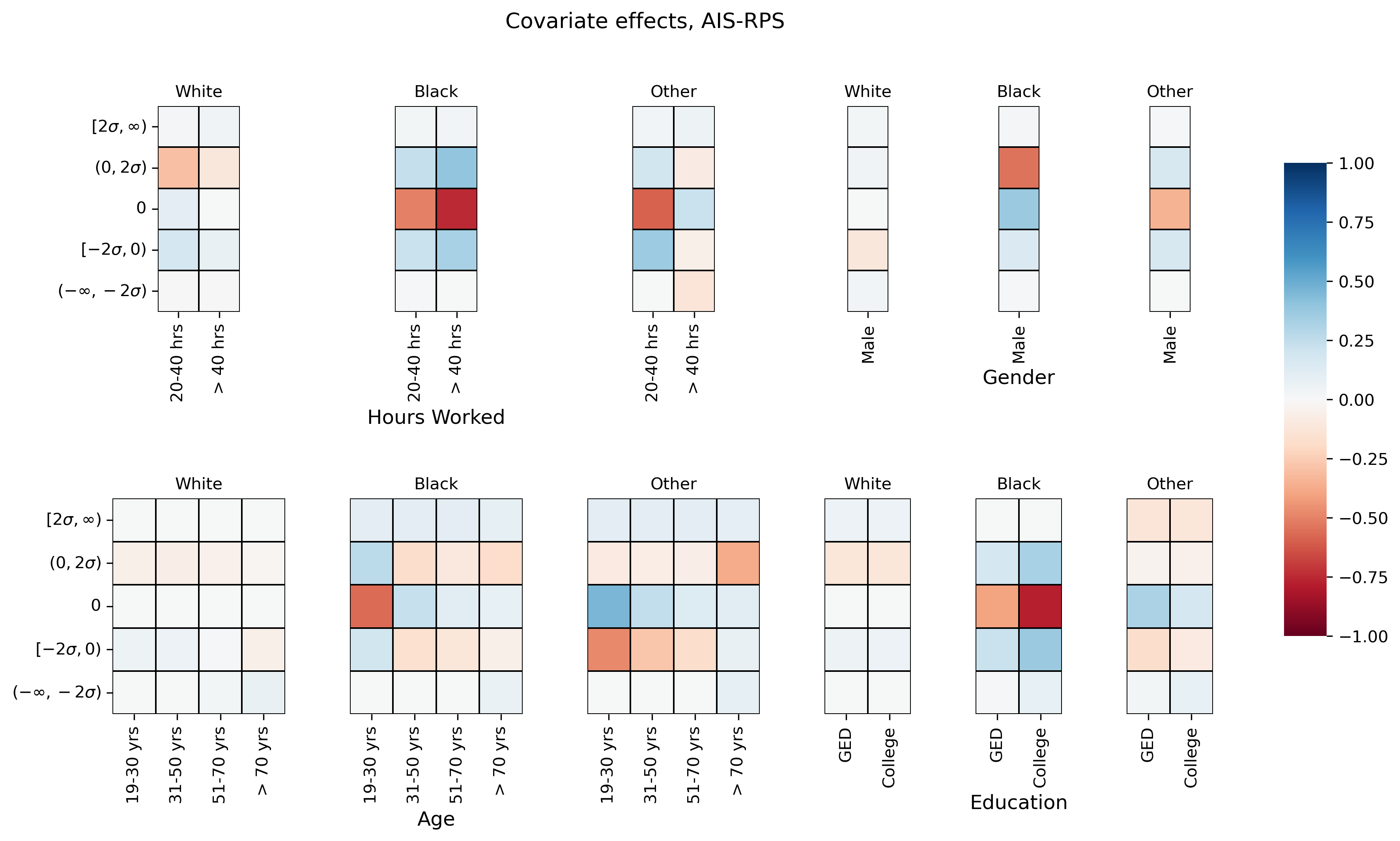}
    \caption{Differences in the estimated variable heterogeneity between AIS and RPS inference}
    \label{fig:nhanes_difference_heatmap}
\end{figure}

The RPS heatmap pools policy-level outcome estimates across all partition matrices within the Rashomon set, weighting each by its model loss -- essentially a loss-weighted average over a discrete, epsilon-bounded collection of models. Because the Rashomon set is defined by a hard threshold, the RPS heatmap is sensitive to the choice of epsilon: models just inside the boundary receive full weight, while equally plausible models just outside are excluded entirely. In contrast, the AIS heatmap is derived from importance-weighted samples drawn from the full g-prior posterior over the model space, with AIS annealing from the RPS-informed prior (which concentrates mass on high-quality partitions) to the posterior target. This means AIS explores states beyond the Rashomon boundary -- including one-step and multi-step neighbors of RPS models -- weighting each terminal state by its true posterior probability under the g-prior, which jointly rewards goodness of fit and parsimony (fewer pools). The result is a heatmap that is less sensitive to the epsilon cutoff, incorporates a principled complexity penalty, and produces posterior credible intervals rather than set-bounded point estimates. Practically, cells where many models in the Rashomon set disagree on the pooling structure show wider AIS credible intervals, while cells with stable pooling across models show tight intervals -- a distinction the RPS heatmap cannot directly express.

\end{document}